\pdfoutput=1
\documentclass[12pt, letterpaper]{article}

\usepackage[margin=1in]{geometry}
\usepackage{amsmath, amssymb, amsfonts}
\usepackage{amsthm}
\usepackage{natbib}
\usepackage{graphicx}
\usepackage[utf8]{inputenc}
\usepackage[hidelinks]{hyperref}

\newtheorem{theorem}{Theorem}

\title{Meta-Analysis Without Normality: Estimating the True Effect Distribution with Penalized Gaussian Mixtures}
\author{
  Daihe Sui and Elizabeth Tipton\\
  \vspace{0.2cm}
  \normalsize Department of Statistics and Data Science, Northwestern University
}
\date{}

\begin{document}

\maketitle

\begin{abstract}
Standard random-effects meta-analysis relies heavily on the assumption that the underlying true effects are normally distributed. In the social sciences, where evidence synthesis increasingly involves large, highly heterogeneous datasets, this assumption is often restrictive and unjustified. Misspecification of the random-effects distribution prevents the detection of asymmetry or multimodality, potentially leading to erroneous conclusions regarding the prevalence of adverse effects or the existence of specific subgroups. This paper introduces a Penalized Gaussian Mixture (PGM) framework designed to recover the entire probability density function of true effects without enforcing a rigid parametric shape. The method adapts to different non-normal scenarios, including skewed and multimodal distributions, while reducing to the normal case when supported by the data. A simulation study demonstrates that in large, highly heterogeneous meta-analyses, PGM yields substantially more accurate estimates of tail probabilities and the density function than standard methods when normality is violated, without substantially compromising efficiency under normality. An empirical application to environmental education data illustrates the practical utility of the method. The proposed framework provides researchers with a robust tool to move beyond simple summary statistics and characterize the complex nature of the true effect distribution in the real world.

\vspace{0.5cm}
\noindent \textbf{Keywords:} Meta-analysis, Meta-regression, Penalized Gaussian mixture, Flexible modeling
\end{abstract}

\section{Introduction}

Meta-analysis is a quantitative, systematic approach for synthesizing results from multiple empirical studies that address a common research question. The term was coined by Gene V. Glass in his presidential address to the American Educational Research Association~\citep{Glass1976}, and the methodology was formalized by~\citet{Hedges1985}. By combining data across independent investigations, meta-analysis increases statistical power, refines effect size estimates, reconciles contradictory findings, and enables the exploration of moderators through subgroup analyses and meta-regression. Consequently, meta-analytic methods have become indispensable in fields such as psychology, education, medicine, and public health, underpinning both theoretical advancement and evidence-based practice.

Unlike in medicine, where meta-analyses often maintain a strict, narrow focus and rely on smaller samples of clinical trials, evidence synthesis in the social sciences typically presents a different profile. A recent review of 1,000 meta-analyses across 10 different disciplines shows that social science meta-analyses are frequently (i) much larger, often encompassing up to hundreds of studies, and (ii) substantially more heterogeneous, as they tend to include a broad spectrum of operational definitions, settings, and designs~\citep{Wu2025}. Given these two characteristics, the study of between-study variation becomes paramount, as the sheer volume and diversity of data allow for a more rigorous examination of how effects differ across contexts. This investigation into the variability between studies, known as heterogeneity, is often the most insightful part of a social science meta-analysis. It helps us shift from asking a simple question: ``Does it work?'' to a far more challenging, but meaningful one: ``Under what circumstances does it work best?''

In the presence of heterogeneity, the random-effects model is typically the method of choice. The conventional framework usually assumes a normal-normal model: each observed effect is normally distributed around its true effect, and the underlying true effects are also normally distributed, explicitly modeling variability across studies. While one might argue that some method of moments estimators do not strictly require the latter normality assumption, statistical inference often relies on it. The former normality assumption is well justified by the Central Limit Theorem; however, the latter lacks similar theoretical justification. In fact, histograms of observed effects in some meta-analyses display obvious non-normality. Since observed effects are effectively a convolution of the true effects and normal noise, this suggests that the distribution of the true effects is likely even more non-normal in those cases. Researchers often use meta-regression to attempt to explain heterogeneity using study characteristics or subgroup information. However, meta-regression has its limits; it is effective only when the covariates adequately account for the variability of effects. Frequently, substantial residual heterogeneity persists after adjustment, sometimes still appearing as complex patterns. In these instances, the normality assumption becomes dubious. Yet, despite these potential shortcomings, the vast majority of meta-analyses continue to model between-study variability using a normal distribution. This reliance is largely due to analytical convenience, model simplicity, and software availability.

One might naturally ask whether misspecifying the distribution of true effects as normal, when it is actually non-normal, yields detrimental consequences. Some simulation studies, such as the work by~\citet{Kontopantelis2010}, have indicated that the performance of standard random-effects models remains quite robust against even severe deviations from normality. However, it is crucial to note that such findings primarily concern the estimation of the mean effect. As previously noted, in the context of large, heterogeneous meta-analyses common in the social sciences, a single summary statistic for the mean is rarely sufficient to fully describe the distribution of true effects. In these cases, the heterogeneity itself is of equal, if not greater, importance. The prediction interval has gained substantial popularity as a tool to interpret heterogeneity when it exists~\citep{Riley2011, Borenstein2017}; however, its validity depends directly on the correctness of the underlying distributional assumption. Under the strict normal assumption, heterogeneity is entirely captured by a single parameter: the variance of the true effects distribution ($\tau^2$). However, if we allow for non-normality, variance alone becomes an inadequate descriptor. To truly understand the nature of the evidence, one requires insight into the entire shape of the true effects distribution.

To visualize why moments alone are insufficient, consider the illustrative example in Figure~\ref{fig:distributions}. The plot displays three distinct probability density functions: a normal distribution, a shifted log-normal distribution, and a mixture of two normal distributions. Crucially, all three distributions are constructed to have identical first and second moments (mean = 1, variance = 1). Despite these identical summary statistics, they convey drastically different substantive narratives. If these densities represented the distribution of true effects for a proposed intervention, the policy implications would diverge sharply. Compared to the normal baseline (solid blue), the log-normal distribution (dashed red) suggests a somewhat safer profile, where the probability of an adverse effect ($P(\theta<0)$) is 0.10. Conversely, the mixture distribution (dotted green) reveals a polarized reality: while the mean remains positive, a distinct sub-population of studies experiences significant harm ($P(\theta<0) = 0.27$). A decision framework based solely on the mean and variance would fail to distinguish the relatively safe log-normal scenario from the highly risky mixture scenario, potentially leading to the adoption of a treatment that is detrimental in a non-negligible number of scenarios.

\begin{figure}[htbp]
    \centering
    \includegraphics[width=0.9\textwidth]{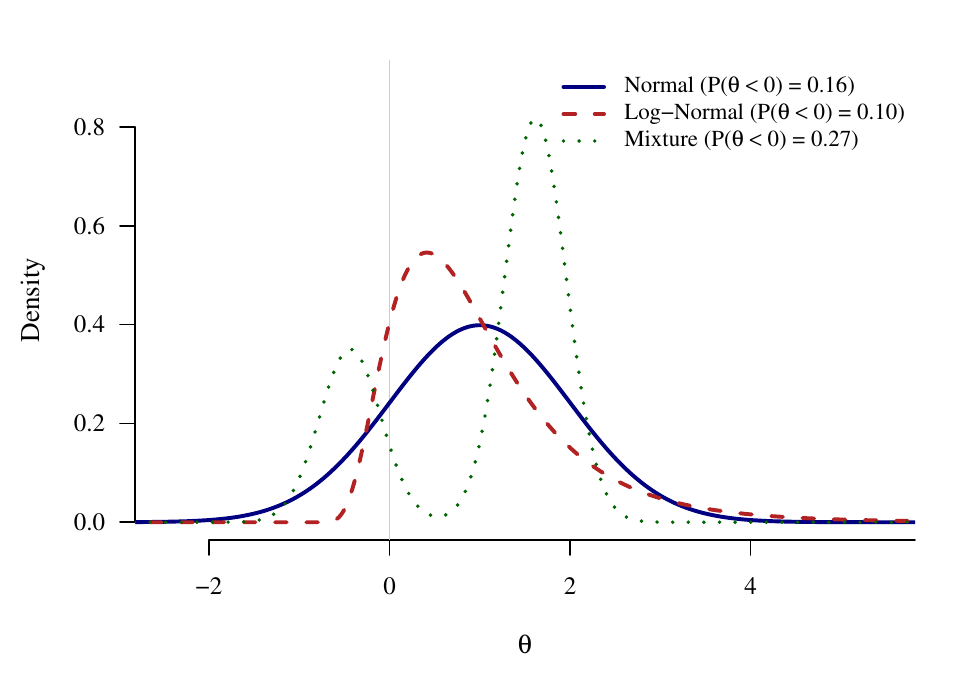}
    \caption{Three distinct distributions (Normal, Log-Normal, and Mixture) that share the same mean ($\mu=1$) and variance ($\tau^2=1$). Despite identical first and second moments, the shapes differ drastically, leading to significantly different probabilities of adverse effects ($P(\theta<0)$).}
    \label{fig:distributions}
\end{figure}

Consequently, many researchers have proposed methods that relax the normality assumption. A recent review by~\citet{Panagiotopoulou2025} examined 24 alternative meta-analytic models. However, most of these models address only specific types of deviations. Essentially, these approaches merely substitute the normality assumption with another specific constraint, which often remains insufficiently flexible. In this paper, we propose a new method designed to handle diverse distributional patterns, including skewed and multimodal distributions, while maintaining performance not much worse than the standard normal-normal model when the true effects are, in fact, normally distributed. In short, we introduce a robust and universally applicable framework.

To achieve this, we propose using a technique called Penalized Gaussian Mixture (PGM) to model the random-effects distribution. This approach utilizes a preset grid of normal densities as building blocks to ``build'' distributions of any shape, while imposing a strong penalty for smoothness to avoid overfitting. The conceptual motivation dates back to~\citet{Eilers1996}, who used B-splines rather than normal densities as the building blocks in regression and density estimation. Similar ideas were subsequently applied in papers such as~\citet{Ghidey2004} and~\citet{Komarek2008}. We illustrate the derivation of comprehensive distributional properties using PGM, including the entire distribution function, moving beyond the simple mean and variance of a normal model.

\section{Method}

\subsection{Conventional Random-Effects Meta-Analysis}

The standard random-effects model provides a foundational framework for meta-analysis. It rests on a two-level hierarchical structure and several key assumptions. At the first level, it is assumed that each of the $N$ independent studies provides an observed effect size, $y_i$, that is an unbiased estimate of a study-specific true effect, $\theta_i$. The sampling error is modeled as normally distributed with a known within-study variance, $v_i$:
\[y_i \mid \theta_i \sim \mathcal{N}(\theta_i, v_i) \quad \text{for } i = 1, \dots, N.\]
Crucially, the within-study variances, $v_i$, are treated as known fixed constants. In practice, they are estimates derived from the primary studies, making this an approximation that works well when primary studies have large sample sizes but can be problematic otherwise.

At the second level, the true effects $\theta_i$ are assumed to be a random sample from a normally distributed superpopulation of effects. This distribution is characterized by a mean $\mu$, representing the overall average true effect, and a variance $\tau^2$, known as the between-study variance or heterogeneity:
\[\theta_i \sim \mathcal{N}(\mu, \tau^2).\]
The parameter $\tau^2$ is a critical quantity, representing the extent to which true effects vary across studies. A value of $\tau^2 = 0$ implies that all studies share a single common true effect, reducing the model to a fixed-effect (or common-effect) model.

Integrating out the latent true effects $\theta_i$ yields the marginal distribution of the observed effect sizes, which is also normal:
\[y_i \sim \mathcal{N}(\mu, v_i + \tau^2).\]
The goal of the meta-analysis is to estimate the two primary parameters: the overall mean effect, $\mu$, and the heterogeneity variance, $\tau^2$. Estimation typically proceeds in two stages. First, $\tau^2$ is estimated, and many methods exist for this purpose~\citep{Veroniki2016}. The traditional yet popular method-of-moments estimator of~\citet{DerSimonian1986} (DL), which is based on Cochran's Q statistic, is considered distribution-free, as it only requires assumptions about the existence of the first two moments of the distribution of effects. Likelihood-based approaches such as maximum likelihood (ML) and restricted maximum likelihood (REML) are obtained by maximizing the log-likelihood function (and restricted log-likelihood) from the marginal normal distribution of the $y_i$, and thus explicitly hinge on the assumption that the true effects are normally distributed.

Once an estimate $\hat{\tau}^2$ is obtained, the overall mean $\mu$ is estimated via a weighted average:
\[\hat{\mu} = \frac{\sum_{i=1}^N w_i y_i}{\sum_{i=1}^N w_i}, \quad \text{with weights } w_i = \frac{1}{v_i + \hat{\tau}^2}.\]
The variance of this estimate, $V(\hat{\mu}) = 1/\sum_{i=1}^N w_i$, is used to construct a confidence interval for $\mu$. When employing likelihood-based methods (ML or REML), statistical inference extends beyond point estimation for heterogeneity. Specifically, one can simply derive the standard error for the heterogeneity estimate, $\hat{\tau}^2$, from the inverse of the Fisher Information Matrix, allowing for the construction of a Wald-type confidence interval. Another more complex alternative is the profile likelihood method~\citep{Hardy1996}, which inverts the likelihood ratio test to generate a confidence interval that respects the parameter space and sampling distribution asymmetry.

The normal-normal model is widely used due to parsimony and tractability. However, its validity hinges on the strong assumption that the true effects $\theta_i$ are normally distributed. The potential violation of this assumption is the primary motivation for the flexible approach proposed next.

\subsection{Penalized Gaussian Mixture}

Instead of relying on the strict normality assumption or substituting it with another specific parametric family, we propose using a Penalized Gaussian Mixture (PGM). This method offers a flexible framework capable of accommodating a wide variety of distributional shapes for the true effects.

The core intuition behind the PGM is best understood visually before detailing the mathematical formulation. As illustrated in Figure~\ref{fig:method}, the approximated probability density function (represented by the dashed black line) is constructed as a weighted sum of multiple underlying ``building blocks.'' These building blocks are fixed normal densities (represented by the dotted grey lines), which are shifted along an evenly spaced grid covering the range of the effects.

\begin{figure}[htbp]
    \centering
    \includegraphics[width=0.85\textwidth, height=0.85\textheight, keepaspectratio]{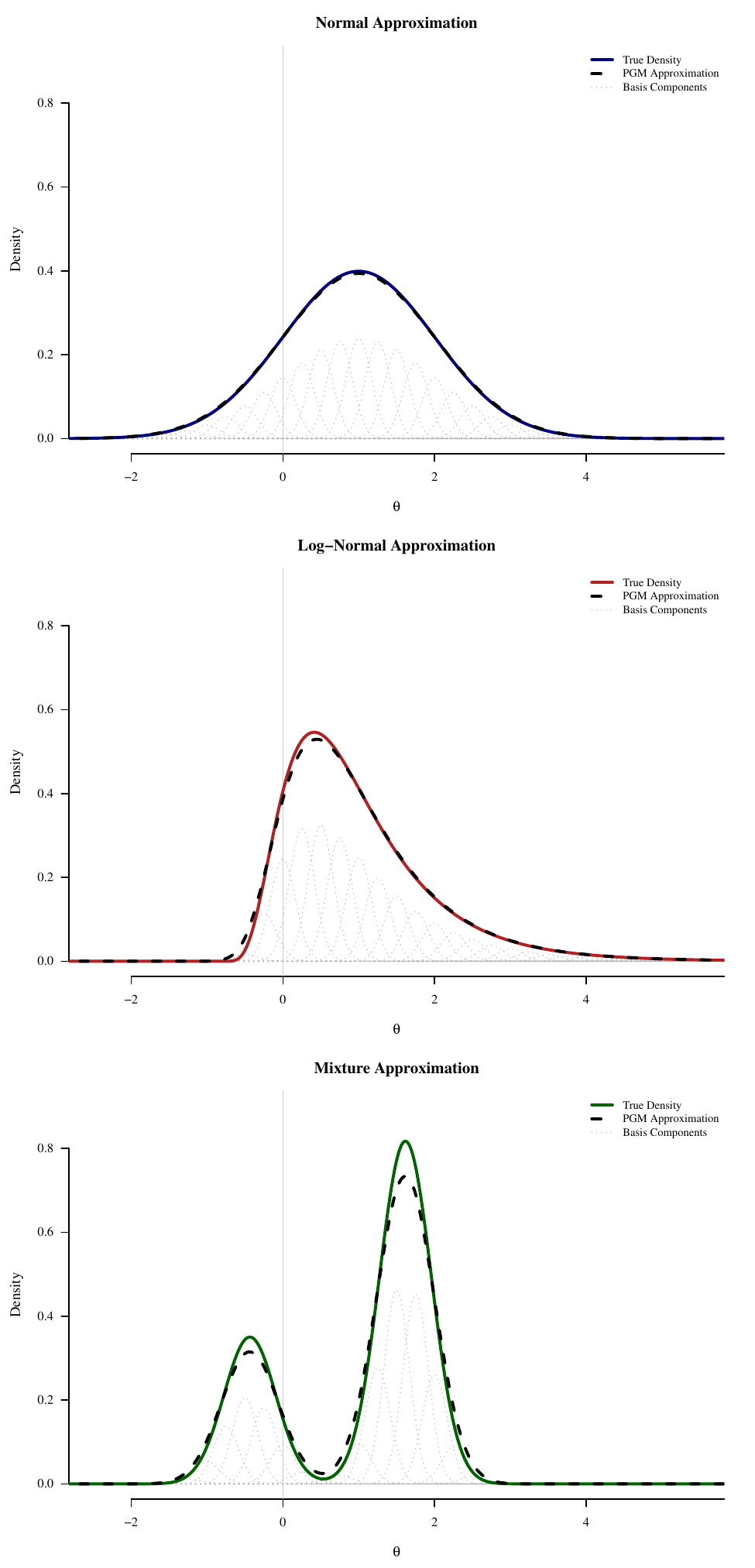}
    \caption{Visual representation of the Penalized Gaussian Mixture (PGM) approach approximating the three distributions introduced in Figure~\ref{fig:distributions}. The solid colored lines represent the true underlying densities, while the dashed black lines represent the PGM approximations. These complex shapes are constructed by aggregating the underlying basis functions (dotted grey lines), which act as weighted building blocks.}
    \label{fig:method}
\end{figure}

By assigning different weights to these component densities, the model can approximate complex patterns that standard methods miss. To demonstrate this versatility, Figure~\ref{fig:method} revisits the three exact distributions introduced earlier: the Normal, Log-Normal, and Mixture scenarios. As shown in the three panels, the PGM framework successfully approximates each distinct shape. In the Normal case, the weights are symmetric and bell-shaped, perfectly reconstructing the standard Gaussian curve. For the Log-Normal distribution, higher weights are assigned asymmetrically to components on the left, while extending a long tail to the right, effectively capturing the skew. Most notably, in the bimodal Mixture scenario, two distinct clusters of higher weights naturally form a distribution with two peaks, revealing the polarized sub-populations. By comparing the true underlying densities (solid colored lines) with the PGM estimates (dashed black lines), it is evident that rather than forcing the data to fit a rigid, pre-defined parametric family, the PGM approach allows these basic building blocks to adaptively form the shape dictated by the data. This flexibility ensures that critical substantive details---such as the elevated probability of adverse effects seen in the mixture scenario---are captured rather than obscured.

Formally, the model is defined for effect sizes $i=1, \dots, N$ as:

\begin{align*}
    y_i \mid \theta_i &\sim \mathcal{N}(\theta_i, v_i)\\
    \theta_i &\sim \sum_{k=1}^K w_k \mathcal{N}(\mu_k, \tau_c^2).
\end{align*}
The variance terms $v_i$ are known. The parameters $\mu_k$ for $k=1, \dots, K$ are preset on an evenly-spaced grid with spacing $\Delta$. The common standard deviation of the basis densities is defined as $\tau_c = c \cdot \Delta$, where $c$ is a fixed scaling constant.

It is crucial to distinguish this model from a ``real'' Gaussian Mixture Model (GMM) for the distribution of $\theta_i$. In a standard GMM, the means and variances of the components are free parameters estimated to identify latent subpopulations. In the PGM approach, however, the component means and standard deviations are fixed constants. Consequently, the individual normal components do not represent distinct, meaningful subgroups; rather, they serve merely as mathematical building blocks (basis functions). The flexibility of the model arises entirely from the estimation of the weights, allowing these fixed blocks to aggregate into a smooth density of any shape.

The choice of the grid boundaries and the constant $c$ requires careful consideration. First, regarding the boundaries of the means $[\mu_1, \mu_K]$, we propose a robust, data-driven approach to determine the appropriate limits based on the observed effect sizes. We compute a robust overall mean ($\hat{\mu}_{\text{rob}}$) using the median of the observed effects, and a robust between-study variance defined as $\hat{\tau}^2_{\text{rob}} = \text{MAD}^2(y_i) - \text{median}(v_i)$, where MAD is the Median Absolute Deviation. We then calculate robust empirical shrinkage estimates of the true effects for each study, defined as $\tilde{\theta}_i = \hat{\mu}_{\text{rob}} + \frac{\hat{\tau}^2_{\text{rob}}}{\hat{\tau}^2_{\text{rob}} + v_i}(y_i - \hat{\mu}_{\text{rob}})$. The grid range is constructed by taking the minimum and maximum of these shrunk estimates, expanded outward by a padding factor of $1.5\hat{\tau}_{\text{rob}}$ on both ends (or $1.5\sqrt{\text{median}(v_i)}$ if $\hat{\tau}^2_{\text{rob}} \le 0$). This adaptive boundary ensures the grid spans the empirical support of the data while resisting the influence of extreme outliers that could unnecessarily stretch the grid. Regardless of the method, the choice involves a trade-off: one does not want the range to be too large or too small. If the range is too large, more components ($K$) are required to maintain the same resolution (holding $\Delta$ constant), which increases the computational burden. Conversely, if the range is too small, the model will miss a significant amount of probability mass or density. However, based on our experience, it is generally better to be conservative and specify a larger range rather than using a smaller range and risking the omission of important chunks of the underlying density.

Likewise, the common standard deviation $\tau_c$ (and therefore, the constant $c$) must be chosen strategically. We recommend setting the common SD such that $c = 2/3$ (i.e., $\tau_c = \frac{2}{3}\Delta$), which mimics the behavior of a cubic B-spline basis whose knots are on an equidistant grid. Again, $c$ cannot be excessively small or large. If $c$ is too large, resolution is lost due to oversmoothing. If $c$ is too small, the sum of the Gaussian densities does not approximate unity uniformly across the domain, which leads to non-negligible bias.

Similar consideration must be given to the choice of $K$, which determines the number of mixture components. In the spirit of the original use of B-spline building blocks~\citep{Eilers1996}, $K$ is usually chosen to be sufficiently large. The philosophy here is to allow the model sufficient flexibility to capture complex features, relying on the strong difference penalty to smooth the weights and prevent overfitting, rather than restricting the model capacity a priori. In practice, however, there is little practical gain from choosing an extremely large $K$. For a meta-analysis with a moderate sample size, we can typically choose $K=20$ or $K=30$. This provides sufficient resolution to capture the shape of the distribution without incurring unnecessary computational costs.

The marginal distribution of $y_i$ is a mixture of normals:
\[y_i \sim \sum_{k=1}^K w_k \mathcal{N}(\mu_k, \tau_c^2 + v_i).\]
The full vector of mixture weights $\boldsymbol{w}$ ($K \times 1$) is reparametrized using a softmax transformation of baseline parameters $\boldsymbol{\alpha}^* = (\alpha^*_1, \dots, \alpha^*_K)^\top$:
\[w_k = \frac{\exp(\alpha^*_k)}{\sum_{l=1}^K \exp(\alpha^*_l)}.\]
For identifiability, we enforce a constraint on the center reference component, $k_0 = \lfloor (K+1)/2 \rfloor$, such that $\alpha^*_{k_0} = 0$. The free parameters are denoted by the vector $\boldsymbol{\alpha}$ ($(K-1) \times 1$). To facilitate concise matrix operations, we define $\boldsymbol{C}$ as the $K \times (K-1)$ constraint matrix that maps the free parameters to the full vector ($\boldsymbol{\alpha}^* = \boldsymbol{C} \boldsymbol{\alpha}$) by inserting a zero at index $k_0$.

Let $\phi_{ik} = \phi(y_i \mid \mu_k, \tau_c^2 + v_i)$ denote the component density, and $f(y_i) = \sum_{k=1}^K w_k \phi_{ik}$ the marginal density. The log-likelihood of the observed data is then $l(\boldsymbol{\alpha}) = \sum_{i=1}^N \log f(y_i)$.

Maximizing this likelihood with a large number of basis functions risks overfitting. To enforce smoothness, we employ a penalized log-likelihood approach with a difference penalty on the transformed weights. The objective function becomes:
\[l_p(\boldsymbol{\alpha} \mid \lambda_\alpha) = \sum_{i=1}^N \log\left( \sum_{k=1}^K w_k \phi_{ik} \right) - \frac{\lambda_\alpha}{2} \boldsymbol{\alpha}^\top \boldsymbol{C}^\top \boldsymbol{D}^\top \boldsymbol{D} \boldsymbol{C} \boldsymbol{\alpha},\]
where $\boldsymbol{D}$ is a difference matrix of order $d$. In practice, $d$ is typically chosen to be 2 or 3. Specifically, when $d=3$, the high penalty forces the third-order differences of the coefficients to zero, resulting in a Gaussian shape. This effectively reduces the model to the normal-normal model in the limiting scenario, which is a highly desirable property.

Since no analytical solution exists for maximizing $l_p(\boldsymbol{\alpha} \mid \lambda_\alpha)$, we employ the Newton-Raphson method. This requires deriving the gradient (score vector) and the negative Hessian (Observed Information Matrix). We define the matrix $\boldsymbol{Q}$ ($N \times K$) with elements $q_{ik} = \frac{w_k \phi_{ik}}{f(y_i)}$.

\begin{theorem}[Penalized Score and Observed Information]
The penalized score vector $\nabla_{\alpha} l_p$ is given by:
\[\nabla_{\alpha} l_p = \boldsymbol{C}^\top \left[ \boldsymbol{Q}^\top \boldsymbol{1}_N - N \boldsymbol{w} - \lambda_\alpha \boldsymbol{D}^\top \boldsymbol{D} \boldsymbol{C} \boldsymbol{\alpha} \right].\]
The penalized Observed Information Matrix (the negative Hessian, $-\nabla^2_{\alpha\alpha} l_p$) is:
\[-\nabla^2_{\alpha\alpha} l_p = \boldsymbol{C}^\top \left[ \boldsymbol{Q}^\top \boldsymbol{Q} - N \boldsymbol{w} \boldsymbol{w}^\top - \operatorname{diag}(\boldsymbol{Q}^\top \boldsymbol{1}_N - N \boldsymbol{w}) + \lambda_\alpha \boldsymbol{D}^\top \boldsymbol{D} \right] \boldsymbol{C}.\]
\end{theorem}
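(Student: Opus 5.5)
The plan is to work in the unconstrained parameter $\boldsymbol{\alpha}^* \in \mathbb{R}^K$ first and then pull the results back to $\boldsymbol{\alpha}$ through the linear map $\boldsymbol{\alpha}^* = \boldsymbol{C}\boldsymbol{\alpha}$. Since the map is linear with constant Jacobian $\boldsymbol{C}$, the chain rule gives $\nabla_{\alpha} l_p = \boldsymbol{C}^\top \nabla_{\alpha^*} l_p$ and $\nabla^2_{\alpha\alpha} l_p = \boldsymbol{C}^\top \nabla^2_{\alpha^*\alpha^*} l_p\, \boldsymbol{C}$, with no second-order term because $\boldsymbol{C}$ does not depend on $\boldsymbol{\alpha}$. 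The penalty is $\frac{\lambda_\alpha}{2}\boldsymbol{\alpha}^{*\top}\boldsymbol{D}^\top\boldsymbol{D}\boldsymbol{\alpha}^*$. It contributes $-\lambda_\alpha \boldsymbol{D}^\top\boldsymbol{D}\boldsymbol{\alpha}^*$ to the gradient and $-\lambda_\alpha\boldsymbol{D}^\top\boldsymbol{D}$ to the Hessian in $\boldsymbol{\alpha}^*$. Sandwiching by $\boldsymbol{C}$ then produces exactly the penalty terms displayed in the theorem, so the remaining work is the unpenalized log-likelihood.

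The key ingredient is the softmax derivative $\partial w_k/\partial \alpha^*_j = w_k(\delta_{kj} - w_j)$. Applying it to $f(y_i) = \sum_k w_k \phi_{ik}$, where the $\phi_{ik}$ do not depend on $\boldsymbol{\alpha}$, gives
$$\frac{\partial f(y_i)}{\partial\alpha^*_j} = w_j\phi_{ij} - w_j f(y_i), \qquad \text{hence}\qquad \frac{\partial \log f(y_i)}{\partial\alpha^*_j} = q_{ij} - w_j.$$
Summing over $i$ gives the score $\boldsymbol{Q}^\top\boldsymbol{1}_N - N\boldsymbol{w}$ in $\boldsymbol{\alpha}^*$. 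Combining this with the penalty gradient and premultiplying by $\boldsymbol{C}^\top$ yields the stated score.

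For the Hessian I will differentiate $q_{ij} - w_j$ once more with respect to $\alpha^*_l$. First, $\partial w_j/\partial\alpha^*_l = w_j\delta_{jl} - w_j w_l$. Second, write $q_{ij} = w_j\phi_{ij}/f(y_i)$ and use the quotient rule with the two derivatives above:
$$\frac{\partial q_{ij}}{\partial\alpha^*_l} = \frac{w_j(\delta_{jl}-w_l)\phi_{ij}}{f(y_i)} - \frac{w_j\phi_{ij}\,w_l(\phi_{il}-f(y_i))}{f(y_i)^2} = q_{ij}\delta_{jl} - q_{ij}q_{il}.$$
The $w_l q_{ij}$ terms cancel here, which is the one algebraic point that needs care. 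Therefore
$$\frac{\partial^2 \log f(y_i)}{\partial\alpha^*_j\partial\alpha^*_l} = q_{ij}\delta_{jl} - q_{ij}q_{il} - w_j\delta_{jl} + w_jw_l.$$
Summing over $i$ and negating gives, in matrix form, $\boldsymbol{Q}^\top\boldsymbol{Q} - N\boldsymbol{w}\boldsymbol{w}^\top - \operatorname{diag}(\boldsymbol{Q}^\top\boldsymbol{1}_N - N\boldsymbol{w})$. Adding $\lambda_\alpha\boldsymbol{D}^\top\boldsymbol{D}$ and sandwiching by $\boldsymbol{C}^\top(\cdot)\boldsymbol{C}$ gives the stated observed information.

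I expect the main obstacle to be bookkeeping rather than anything conceptual. The two places to get right are the quotient-rule cancellation in $\partial q_{ij}/\partial\alpha^*_l$ and the signs of the diagonal terms when passing to matrix notation. I will verify the result by checking that each row of the $K\times K$ unpenalized Hessian in $\boldsymbol{\alpha}^*$ sums to zero, as it must from the softmax invariance $\boldsymbol{\alpha}^*\mapsto\boldsymbol{\alpha}^*+t\boldsymbol{1}$. This holds because $\sum_l q_{il} = 1$ and $\sum_l w_l = 1$. The same invariance explains why the constraint matrix $\boldsymbol{C}$ is needed for a nonsingular information matrix.
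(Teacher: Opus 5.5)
Your proposal is correct and follows essentially the same route as the paper. You apply the softmax chain rule in the unconstrained $\boldsymbol{\alpha}^*$ coordinates to obtain $q_{ij}-w_j$, differentiate once more, sum over studies, and then add the penalty terms and sandwich with $\boldsymbol{C}$. Your quotient-rule derivation of $\partial q_{ij}/\partial\alpha^*_l = q_{ij}\delta_{jl} - q_{ij}q_{il}$ explicitly verifies the Jacobian $\operatorname{diag}(\boldsymbol{q}_i)-\boldsymbol{q}_i\boldsymbol{q}_i^\top$, which the paper states without computation.
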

\begin{proof}
The log-likelihood contribution of study $i$ is $l_i = \log \sum_{k=1}^K w_k \phi_{ik}$. The derivative with respect to the transformed weights $\boldsymbol{\alpha}^*$ is obtained by applying the chain rule through $\boldsymbol{w}$. Let $\delta_{kj}$ denote the Kronecker delta (equal to 1 if $k=j$ and 0 otherwise). Given the softmax function, $\frac{\partial w_k}{\partial \alpha^*_j} = w_k (\delta_{kj} - w_j)$. Thus, $\frac{\partial l_i}{\partial \alpha^*_j} = \sum_k \frac{\phi_{ik}}{f(y_i)} w_k (\delta_{kj} - w_j) = \sum_k q_{ik}(\delta_{kj} - w_j) = q_{ij} - w_j$. Let $\boldsymbol{q}_i = (q_{i1}, \dots, q_{iK})^\top$ denote the $K \times 1$ column vector corresponding to the $i$-th row of $\boldsymbol{Q}$. Then in vector form, $\nabla_{\alpha^*} l_i = \boldsymbol{q}_i - \boldsymbol{w}$. Summing over all $N$ studies yields $\boldsymbol{Q}^\top \boldsymbol{1}_N - N\boldsymbol{w}$. Applying the linear constraint $\boldsymbol{\alpha}^* = \boldsymbol{C}\boldsymbol{\alpha}$ gives the unpenalized score $\boldsymbol{C}^\top (\boldsymbol{Q}^\top \boldsymbol{1}_N - N\boldsymbol{w})$. Adding the derivative of the penalty term $-\frac{\lambda_\alpha}{2} \boldsymbol{\alpha}^\top \boldsymbol{C}^\top \boldsymbol{D}^\top \boldsymbol{D} \boldsymbol{C} \boldsymbol{\alpha}$ directly produces the penalized score.

For the Hessian, we differentiate $\boldsymbol{q}_i - \boldsymbol{w}$ with respect to $\boldsymbol{\alpha}^*$. The Jacobian of $\boldsymbol{q}_i$ is $\operatorname{diag}(\boldsymbol{q}_i) - \boldsymbol{q}_i \boldsymbol{q}_i^\top$, and the Jacobian of $\boldsymbol{w}$ is $\operatorname{diag}(\boldsymbol{w}) - \boldsymbol{w} \boldsymbol{w}^\top$. Therefore, the second derivative for study $i$ is $\operatorname{diag}(\boldsymbol{q}_i - \boldsymbol{w}) - \boldsymbol{q}_i \boldsymbol{q}_i^\top + \boldsymbol{w} \boldsymbol{w}^\top$. Summing over $i$ and taking the negative yields the unpenalized Observed Information: $\boldsymbol{Q}^\top \boldsymbol{Q} - N\boldsymbol{w} \boldsymbol{w}^\top - \operatorname{diag}(\boldsymbol{Q}^\top \boldsymbol{1}_N - N\boldsymbol{w})$. Including the second derivative of the penalty term ($\lambda_\alpha \boldsymbol{D}^\top \boldsymbol{D}$) and pre/post-multiplying by $\boldsymbol{C}^\top$ and $\boldsymbol{C}$ respectively completes the proof.
\end{proof}

The smoothing parameter $\lambda_\alpha$ is selected to minimize the Akaike Information Criterion (AIC). Because the penalty term effectively restricts the freedom of the parameters, we must utilize the effective degrees of freedom (EDF) derived from the information matrices. Let $\boldsymbol{J}_p = -\nabla^2_{\alpha\alpha} l_p(\hat{\boldsymbol{\alpha}} \mid \lambda_\alpha)$ denote the penalized observed information matrix evaluated at the estimates, and let $\boldsymbol{J} = -\nabla^2_{\alpha\alpha} l(\hat{\boldsymbol{\alpha}})$ denote the unpenalized observed information matrix. Consequently, the AIC is calculated as:
\[\text{AIC}(\lambda_\alpha) = -2l(\hat{\boldsymbol{\alpha}}) + 2\operatorname{tr}(\boldsymbol{J}_p^{-1}\boldsymbol{J}).\]
The trace term, $\operatorname{tr}(\boldsymbol{J}_p^{-1}\boldsymbol{J})$, represents the effective degrees of freedom, quantifying the complexity of the fitted model for a given smoothing parameter $\lambda_\alpha$. Notably, in the unpenalized scenario where $\lambda_\alpha = 0$, the penalized information matrix $\boldsymbol{J}_p$ is identical to the unpenalized matrix $\boldsymbol{J}$. In this case, the product $\boldsymbol{J}_p^{-1}\boldsymbol{J}$ reduces to the identity matrix, and the trace naturally simplifies to the exact number of free parameters in the model.

After obtaining the optimal smoothing parameter $\lambda_\alpha$ and the corresponding penalized maximum likelihood estimates $\hat{\boldsymbol{\alpha}}$, we can directly compute the estimated weights vector $\hat{\boldsymbol{w}}$. Because the proposed model is a mixture of known basis functions, key distributional properties are simply linear combinations or straightforward functions of these weights. For instance, the overall mean of the true effects is estimated as the weight-averaged sum of the component means, $\hat{\mu} = \boldsymbol{\mu}^\top \hat{\boldsymbol{w}}$. Similarly, the variance (heterogeneity) is obtained via the law of total variance as $\hat{\tau}^2 = \tau_c^2 + (\boldsymbol{\mu}^{(2)})^\top \hat{\boldsymbol{w}} - \hat{\mu}^2$, where $\boldsymbol{\mu}^{(2)}$ is the vector of squared component means.

Beyond basic moments, the primary advantage of the PGM framework lies in recovering the entire shape of the true effects distribution. The probability density function (PDF) at any given point $\theta^*$ is naturally evaluated as the weighted sum of the component densities, $\hat{f}(\theta^*) = \boldsymbol{\phi}(\theta^*)^\top \hat{\boldsymbol{w}}$, where $\boldsymbol{\phi}(\theta^*)$ is the vector of basis normal densities evaluated at $\theta^*$. By substituting these component densities with their cumulative counterparts, $\boldsymbol{\Phi}(\theta^*)$, we obtain the cumulative distribution function (CDF), $\hat{F}(\theta^*) = \boldsymbol{\Phi}(\theta^*)^\top \hat{\boldsymbol{w}}$. This formulation allows researchers to easily calculate tail probabilities, such as the proportion of adverse effects ($P(\theta < 0)$). Finally, quantiles can be estimated by simply finding the root that inverts the CDF, which is useful for constructing prediction intervals to interpret the spread of effects. If necessary, standard errors and confidence intervals for all of these derived quantities can be analytically obtained using the Delta method, utilizing the covariance matrix of the penalized parameter estimates ($\boldsymbol{V}_\alpha = \boldsymbol{J}_p^{-1} \boldsymbol{J} \boldsymbol{J}_p^{-1}$).

\subsection{Meta-Regression Extensions}

\subsubsection{Location-Shifting Meta-Regression Model}
\label{sec:location_meta_regression}

While the intercept-only PGM model provides a flexible approach to estimating the overall, unconditional distribution of true effects, a central goal in many syntheses is to explicitly explain observed heterogeneity rather than merely quantify it. To accommodate this, we now extend our framework into the meta-regression setting by incorporating covariates. The first approach we introduce to achieve this is the location-shifting model. In this framework, covariates uniformly shift the entire distribution's location along the effect-size continuum without altering its fundamental shape.

Formally, let $\boldsymbol{x}_i$ be the vector of $p$ covariates for study $i$, and $\boldsymbol{\beta}$ the corresponding fixed-effect coefficients. The location-shifting model is defined as:
\begin{align*}
y_i \mid \theta_i &\sim \mathcal{N}(\theta_i, v_i)\\
\theta_i &\sim \sum_{k=1}^K w_k \mathcal{N}(\mu_k + \boldsymbol{x}_i^\top \boldsymbol{\beta}, \tau_c^2)
\end{align*}
This formulation enforces the constraint that the mixture weights $w_k$ remain constant across all studies. Consequently, any change in the covariates simply translates the entire baseline density curve along the x-axis, preserving its variance, skewness, and overall shape.

This behavior is visually demonstrated in Figure~\ref{fig:location_shifting}. The baseline distribution (solid red line) represents the density of true effects when $x=0$. When the covariate changes to $x=-1$ (dotted blue line) or $x=1$ (dashed green line), the entire distribution simply shifts to the left or right along the continuum. Because the underlying component weights are static, the structural properties of the density---such as its variance and skewness---remain perfectly preserved across all contexts.

\begin{figure}[htbp]
    \centering
    \includegraphics[width=0.8\textwidth]{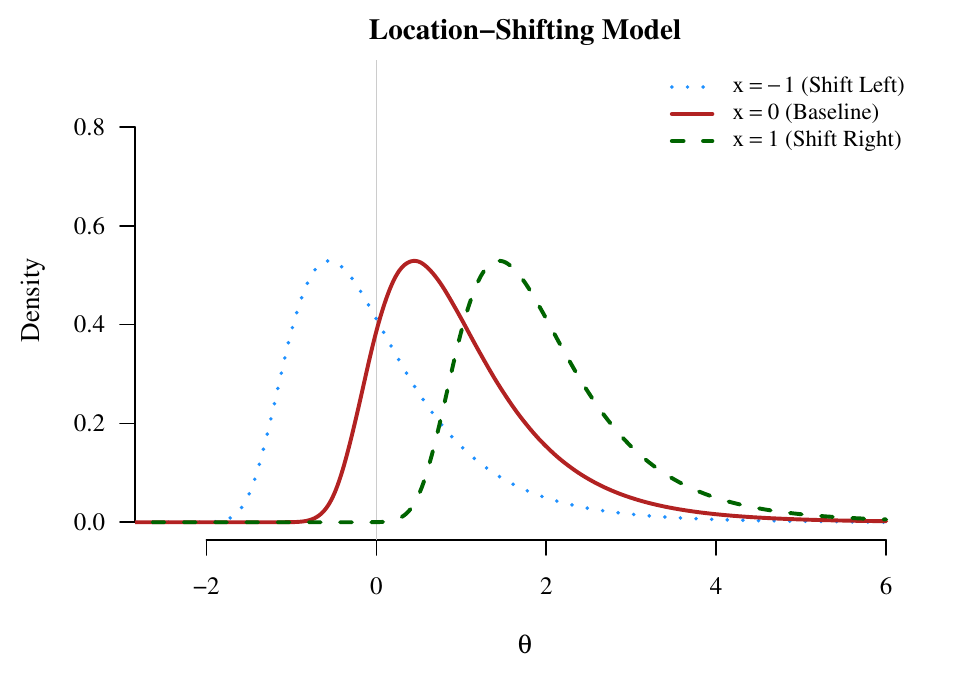}
    \caption{Illustration of the Location-Shifting Model. Covariates shift the entire distribution along the effect-size continuum (e.g., shifting left when $x=-1$ and right when $x=1$) while maintaining the exact same variance and shape as the baseline ($x=0$).}
    \label{fig:location_shifting}
\end{figure}

To define the baseline grid boundaries $[\mu_1, \mu_K]$ for the unshifted component means in the presence of covariates, we adapt the robust data-driven approach used in the intercept-only model. First, we fit a preliminary weighted least squares regression of $y_i$ on $\boldsymbol{x}_i$, weighted by the inverse of the known within-study variances ($1/v_i$), to partial out the linear effects of the covariates. This isolates the base residuals $r_i = y_i - \boldsymbol{x}_i^\top \hat{\boldsymbol{\beta}}_{\text{WLS}}$. We then simply apply the exact same empirical shrinkage and boundary padding logic introduced previously directly to these residuals $r_i$ to establish the base grid. This ensures that the baseline component means perfectly capture the span of the unexplained residual heterogeneity before the covariate-driven location shifts are applied.

Let $\boldsymbol{X}$ be the $N \times p$ design matrix. We define the $N \times K$ matrix of scaled residuals, $\tilde{\boldsymbol{R}}$, with elements $\tilde{r}_{ik} = (y_i - (\mu_k + \boldsymbol{x}_i^\top \boldsymbol{\beta})) / (\tau_c^2 + v_i)$, and let $\tilde{\boldsymbol{r}}_i = (\tilde{r}_{i1}, \dots, \tilde{r}_{iK})^\top$ be the $K \times 1$ vector for study $i$. We then define $\boldsymbol{\bar{r}} = (\boldsymbol{Q} \circ \tilde{\boldsymbol{R}})\boldsymbol{1}_K$, where $\circ$ denotes the Hadamard (element-wise) product, and the $i$-th element is denoted as $\bar{r}_i$. Let $\phi_{ik} = \phi(y_i \mid \mu_k + \boldsymbol{x}_i^\top \boldsymbol{\beta}, \tau_c^2 + v_i)$ denote the study-specific component density. We define the combined parameter vector $\boldsymbol{\eta} = (\boldsymbol{\alpha}^\top, \boldsymbol{\beta}^\top)^\top$. The penalized log-likelihood function for the location-shifting model continues to apply the difference penalty to the shape parameters $\boldsymbol{\alpha}$:
\[l_p(\boldsymbol{\eta} \mid \lambda_\alpha) = l_p(\boldsymbol{\alpha}, \boldsymbol{\beta} \mid \lambda_\alpha) = \sum_{i=1}^N \log\left( \sum_{k=1}^K w_k \phi_{ik} \right) - \frac{\lambda_\alpha}{2} \boldsymbol{\alpha}^\top \boldsymbol{C}^\top \boldsymbol{D}^\top \boldsymbol{D} \boldsymbol{C} \boldsymbol{\alpha}.\]

\begin{theorem}[Penalized Score and Observed Information in Location-Shifting Model]
The penalized score vector partitions into:
\begin{align*}
\nabla_{\alpha} l_p &= \boldsymbol{C}^\top \left[ \boldsymbol{Q}^\top \boldsymbol{1}_N - N \boldsymbol{w} - \lambda_\alpha \boldsymbol{D}^\top \boldsymbol{D} \boldsymbol{C} \boldsymbol{\alpha} \right]\\
\nabla_\beta l_p &= \boldsymbol{X}^\top \boldsymbol{\bar{r}}
\end{align*}
Let $\boldsymbol{\tilde{V}} = \operatorname{diag}(\tau_c^2 + v_1, \dots, \tau_c^2 + v_N)$. The blocks of the penalized Observed Information Matrix are:
\begin{align*}
-\nabla^2_{\alpha\alpha} l_p &= \boldsymbol{C}^\top \left[ \boldsymbol{Q}^\top \boldsymbol{Q} - N \boldsymbol{w} \boldsymbol{w}^\top - \operatorname{diag}(\boldsymbol{Q}^\top \boldsymbol{1}_N - N \boldsymbol{w}) + \lambda_\alpha \boldsymbol{D}^\top \boldsymbol{D} \right] \boldsymbol{C}\\
-\nabla^2_{\beta\beta} l_p &= \boldsymbol{X}^\top \left[ \boldsymbol{\tilde{V}}^{-1} - \operatorname{diag}\big((\boldsymbol{Q} \circ \tilde{\boldsymbol{R}} \circ \tilde{\boldsymbol{R}})\boldsymbol{1}_K - \boldsymbol{\bar{r}} \circ \boldsymbol{\bar{r}} \big) \right] \boldsymbol{X}\\
-\nabla^2_{\alpha\beta} l_p &= \boldsymbol{C}^\top \left[ \boldsymbol{Q} \circ (\tilde{\boldsymbol{R}} - \boldsymbol{\bar{r}}\boldsymbol{1}_K^\top) \right]^\top \boldsymbol{X}
\end{align*}
\end{theorem}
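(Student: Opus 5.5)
The plan is to reduce everything to derivatives of $\log f(y_i)$ with respect to two "linear predictors": the softmax logits $\boldsymbol{\alpha}^*$ and the study-specific shift $s_i = \boldsymbol{x}_i^\top\boldsymbol{\beta}$. Since $\phi_{ik}$ depends on $\boldsymbol{\beta}$ only through $s_i$, and $\boldsymbol{w}$ depends only on $\boldsymbol{\alpha}^*$, the chain rule factorizes cleanly. The $\boldsymbol{\alpha}$-block of the score and of the information is identical to the previous theorem: for fixed $\boldsymbol{\beta}$, the $\phi_{ik}$ are simply fixed constants, and the earlier proof never used their specific form. I will therefore invoke the previous theorem verbatim for $\nabla_\alpha l_p$ and $-\nabla^2_{\alpha\alpha} l_p$. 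The penalty does not involve $\boldsymbol{\beta}$, so it contributes nothing to the $\beta$ and cross blocks.

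For the $\boldsymbol{\beta}$ score, I use $\partial \phi_{ik}/\partial s_i = \phi_{ik}\,\tilde r_{ik}$. This holds because $\phi_{ik}$ is a normal density in $y_i$ with mean $\mu_k + s_i$ and variance $\tau_c^2+v_i$. Hence $\partial l_i/\partial s_i = \sum_k w_k\phi_{ik}\tilde r_{ik}/f(y_i) = \sum_k q_{ik}\tilde r_{ik} = \bar r_i$, and multiplying by $\partial s_i/\partial\boldsymbol{\beta} = \boldsymbol{x}_i$ and summing over $i$ gives $\boldsymbol{X}^\top\bar{\boldsymbol{r}}$.

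For the $\beta\beta$ block, I differentiate $\bar r_i = \sum_k q_{ik}\tilde r_{ik}$ in $s_i$, using two facts:
\begin{itemize}
\item $\partial\tilde r_{ik}/\partial s_i = -1/(\tau_c^2+v_i)$;
\item $\partial q_{ik}/\partial s_i = q_{ik}(\tilde r_{ik} - \bar r_i)$, which follows by differentiating $w_k\phi_{ik}/f(y_i)$ with the quotient rule.
\end{itemize}
Since $\sum_k q_{ik}=1$, this gives
\[
\partial \bar r_i/\partial s_i = -(\tau_c^2+v_i)^{-1} + \sum_k q_{ik}\tilde r_{ik}^2 - \bar r_i^2 .
\]
Negating and applying the chain rule through $s_i=\boldsymbol{x}_i^\top\boldsymbol{\beta}$ yields $\boldsymbol{X}^\top[\tilde{\boldsymbol{V}}^{-1} - \operatorname{diag}((\boldsymbol{Q}\circ\tilde{\boldsymbol{R}}\circ\tilde{\boldsymbol{R}})\boldsymbol{1}_K - \bar{\boldsymbol{r}}\circ\bar{\boldsymbol{r}})]\boldsymbol{X}$.

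For the cross block, I differentiate the $\alpha^*$-score $q_{ij} - w_j$ with respect to $s_i$. Here $w_j$ does not depend on $\boldsymbol{\beta}$, and $\partial q_{ij}/\partial s_i = q_{ij}(\tilde r_{ij}-\bar r_i)$ by the formula above. So the mixed derivative for study $i$ is the vector $\boldsymbol{q}_i\circ(\tilde{\boldsymbol{r}}_i - \bar r_i\boldsymbol{1}_K)$ times $\boldsymbol{x}_i^\top$. Summing over $i$ gives $[\boldsymbol{Q}\circ(\tilde{\boldsymbol{R}}-\bar{\boldsymbol{r}}\boldsymbol{1}_K^\top)]^\top\boldsymbol{X}$, and mapping through $\boldsymbol{C}$ yields the stated block.

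The main obstacle is the sign bookkeeping. The displayed cross block as written is the positive mixed derivative $\nabla^2_{\alpha\beta} l$, so I must check carefully whether the stated "negative Hessian" block carries a sign discrepancy, and reconcile or flag it. Related to this, the quotient-rule identity for $\partial q_{ik}/\partial s_i$ must be done carefully, since both $\phi_{ik}$ and $f(y_i)$ depend on $s_i$.
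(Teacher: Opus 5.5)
Your proposal is correct and follows the paper's proof essentially step for step; reducing to the scalar shift $s_i=\boldsymbol{x}_i^\top\boldsymbol{\beta}$ is only a notational variant. As in the paper, the $\alpha$-blocks are inherited from Theorem 1, the $\beta$-score comes from $\partial\log\phi_{ik}/\partial\boldsymbol{\beta}=\tilde r_{ik}\boldsymbol{x}_i$, the $\beta\beta$ block uses the identities $\partial\tilde r_{ik}/\partial\boldsymbol{\beta}=-(\tau_c^2+v_i)^{-1}\boldsymbol{x}_i$ and $\partial q_{ik}/\partial\boldsymbol{\beta}=q_{ik}(\tilde r_{ik}-\bar r_i)\boldsymbol{x}_i$, and the cross block comes from differentiating $\boldsymbol{q}_i-\boldsymbol{w}$ in $\boldsymbol{\beta}$.

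Your sign concern is justified, and it cannot be reconciled, only flagged. The paper's proof also arrives at $\operatorname{diag}(\boldsymbol{q}_i)(\tilde{\boldsymbol{r}}_i-\bar r_i\boldsymbol{1}_K)\boldsymbol{x}_i^\top$, which is the positive mixed derivative $\partial^2 l_i/\partial\boldsymbol{\alpha}^*\partial\boldsymbol{\beta}^\top$, and it reports the sum of these terms without negating it. So the correct block is $-\nabla^2_{\alpha\beta}l_p=-\boldsymbol{C}^\top[\boldsymbol{Q}\circ(\tilde{\boldsymbol{R}}-\bar{\boldsymbol{r}}\boldsymbol{1}_K^\top)]^\top\boldsymbol{X}$, while the scores and the $\alpha\alpha$ and $\beta\beta$ blocks are right as stated.
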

\begin{proof}
The derivation of the score and the negative Hessian with respect to $\boldsymbol{\alpha}$ is similar to Theorem 1. The score with respect to $\boldsymbol{\beta}$ requires the derivative of the log-likelihood $l_i = \log \sum_k w_k \phi_{ik}$. We have $\frac{\partial l_i}{\partial \boldsymbol{\beta}} = \sum_k q_{ik} \frac{\partial \log \phi_{ik}}{\partial \boldsymbol{\beta}}$. Since $\phi_{ik}$ is a normal density with variance $\tilde{v}_i = \tau_c^2 + v_i$ and mean $\mu_{ik} = \mu_k + \boldsymbol{x}_i^\top \boldsymbol{\beta}$, $\frac{\partial \log \phi_{ik}}{\partial \boldsymbol{\beta}} = \frac{y_i - \mu_{ik}}{\tilde{v}_i} \boldsymbol{x}_i = \tilde{r}_{ik} \boldsymbol{x}_i$. Thus, $\frac{\partial l_i}{\partial \boldsymbol{\beta}} = \left( \sum_k q_{ik} \tilde{r}_{ik} \right) \boldsymbol{x}_i = \bar{r}_i \boldsymbol{x}_i$. Summing over $i$ yields $\boldsymbol{X}^\top \boldsymbol{\bar{r}}$.

$-\nabla^2_{\beta\beta} l_p$ requires the derivative of $\bar{r}_i$ with respect to $\boldsymbol{\beta}$: $\frac{\partial \bar{r}_i}{\partial \boldsymbol{\beta}} = \sum_k \left( \frac{\partial q_{ik}}{\partial \boldsymbol{\beta}} \tilde{r}_{ik} + q_{ik} \frac{\partial \tilde{r}_{ik}}{\partial \boldsymbol{\beta}} \right)$. Using $\frac{\partial \tilde{r}_{ik}}{\partial \boldsymbol{\beta}} = -\frac{1}{\tilde{v}_i} \boldsymbol{x}_i$ and $\frac{\partial q_{ik}}{\partial \boldsymbol{\beta}} = q_{ik}(\tilde{r}_{ik} - \bar{r}_i)\boldsymbol{x}_i$, we obtain $\frac{\partial \bar{r}_i}{\partial \boldsymbol{\beta}} = \left[ \sum_k q_{ik} \tilde{r}_{ik}^2 - \bar{r}_i^2 - \frac{1}{\tilde{v}_i} \right] \boldsymbol{x}_i$. Summing over $i$ in matrix form yields the specified Hessian block. The cross-derivative block $-\nabla^2_{\alpha\beta} l_p$ follows from taking the derivative of the $\boldsymbol{\alpha}^*$ score, $\boldsymbol{q}_i - \boldsymbol{w}$, with respect to $\boldsymbol{\beta}$, yielding $\operatorname{diag}(\boldsymbol{q}_i)(\tilde{\boldsymbol{r}}_i - \bar{r}_i\boldsymbol{1}_K)\boldsymbol{x}_i^\top$. Summation over $i$ and applying $\boldsymbol{C}$ produces the final matrix.
\end{proof}

Estimation proceeds via the Newton-Raphson method, utilizing the score and Hessian defined above. AIC is again used to select the optimal model. Once the penalized maximum likelihood estimates are obtained, quantities of interest can be derived in a straightforward manner. Because the location-shifting model strictly preserves the shape of the baseline distribution, any specific distributional quantities for a given covariate value---such as the mean, variance, PDF, CDF, tail probabilities, or quantiles---can be computed by simply shifting the baseline component means ($\mu_k$) by the estimated linear predictor ($\boldsymbol{x}^\top \hat{\boldsymbol{\beta}}$) and applying the exact same logic introduced in the intercept-only model.

The estimated location shifts are given directly by the fixed-effects vector $\hat{\boldsymbol{\beta}}$. The interpretation of $\hat{\boldsymbol{\beta}}$ in this location-shifting framework is directly analogous to coefficients in a standard meta-regression model. A coefficient $\hat{\beta}_j$ represents the estimated change in the overall mean of the true effects for a one-unit increase in the covariate $X_j$, holding all else constant. Because this model enforces a fixed distribution shape, a shift in the mean mathematically equates to a uniform shift of the entire probability density curve along the x-axis.

The decision to use this location-shifting model should be driven by both theoretical and empirical considerations. It is most appropriate when one can reasonably assume that after adjusting for the covariates, the residual heterogeneity maintains a similar shape and comparable magnitude across different contexts. Alternatively, it serves well when the expected contribution of the fixed-effect covariates is comparatively smaller than the vast, overarching residual heterogeneity. In such scenarios, adopting the location-shifting assumption provides a parsimonious summary without oversimplifying the complex reality of the underlying evidence base.

\subsubsection{Shape-Morphing Meta-Regression Model}
\label{sec:shape_meta_regression}

While the location-shifting approach provides a parsimonious summary by assuming a constant distributional shape across contexts, some covariates may fundamentally alter the structural dispersion of the effects rather than merely shifting their mean. To address this, the alternative shape-morphing model allows covariates to directly influence the mixture weights, causing the entire shape of the true effects distribution to mutate depending on study characteristics. Here we present this framework considering a single covariate $\boldsymbol{z}$. The shape-morphing model is defined as:
\begin{align*}
y_i \mid \theta_i &\sim \mathcal{N}(\theta_i, v_i)\\
\theta_i &\sim \sum_{k=1}^K w_{ik} \mathcal{N}(\mu_k, \tau_c^2)
\end{align*}
Crucially, the base grid of component means ($\mu_k$) is shared universally across all levels of the covariate. Because the grid locations remain fixed, the covariate $z_i$ modifies the mixture weights via a softmax transformation:
\[w_{ik} = \frac{\exp(\alpha^*_k + z_i \gamma^*_k)}{\sum_{l=1}^K \exp(\alpha^*_l + z_i \gamma^*_l)}\]

Figure~\ref{fig:shape_morphing} provides a visual demonstration of this dynamic. Unlike the location-shifting model, the baseline distribution (solid red line, $z=0$) fundamentally mutates across different covariate levels. As $z$ increases to $1$ (dashed green line), the model assigns higher weights to the right-side components, creating a heavier skew and elongating the tail. Conversely, when $z$ decreases to $-1$ (dotted blue line), the skew is suppressed, and the distribution concentrates more probability mass around its central peak. 

\begin{figure}[htbp]
    \centering
    \includegraphics[width=0.8\textwidth]{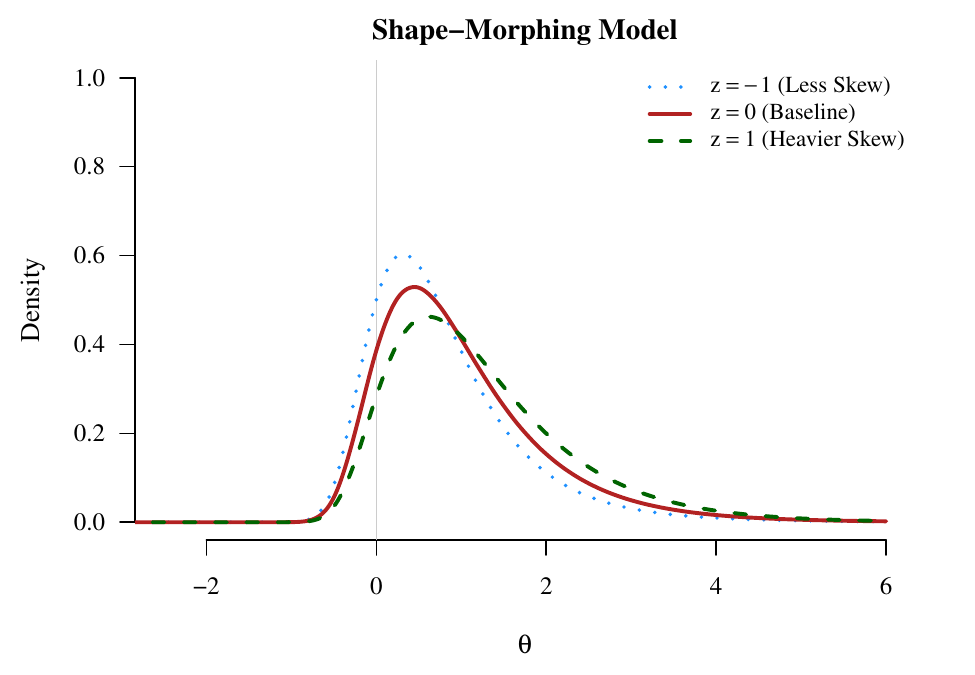}
    \caption{Illustration of the Shape-Morphing Model. The covariate $z$ directly influences the mixture weights, allowing the distribution to exhibit different structural properties---such as varying degrees of skewness---across different conditions.}
    \label{fig:shape_morphing}
\end{figure}

The shape modifications are captured by the parameters $\boldsymbol{\gamma}$, utilizing the same constraint matrix as the intercepts ($\boldsymbol{\gamma}^* = \boldsymbol{C}\boldsymbol{\gamma}$). Let $\boldsymbol{W}$ denote the $N \times K$ matrix of study-specific weights, with elements $w_{ik}$, and let $\boldsymbol{w}_i = (w_{i1}, \dots, w_{iK})^\top$ denote the $K \times 1$ weight vector for study $i$. We define the combined parameter vector $\boldsymbol{\eta} = (\boldsymbol{\alpha}^\top, \boldsymbol{\gamma}^\top)^\top$. The penalized log-likelihood applies penalties for both $\boldsymbol{\alpha}$ and $\boldsymbol{\gamma}$:
\[l_p(\boldsymbol{\alpha}, \boldsymbol{\gamma} \mid \lambda_\alpha, \lambda_\gamma) = \sum_{i=1}^N \log\left( \sum_{k=1}^K w_{ik} \phi_{ik} \right) - \frac{\lambda_\alpha}{2} \boldsymbol{\alpha}^\top \boldsymbol{C}^\top \boldsymbol{D}^\top \boldsymbol{D} \boldsymbol{C} \boldsymbol{\alpha} - \frac{\lambda_\gamma}{2} \boldsymbol{\gamma}^\top \boldsymbol{C}^\top \boldsymbol{D}^\top \boldsymbol{D} \boldsymbol{C} \boldsymbol{\gamma}\]

Just as $\lambda_\alpha$ controls the baseline smoothness of the distribution, the penalty parameter $\lambda_\gamma$ governs the complexity of the shape modification. By employing a third-order difference penalty on $\boldsymbol{\gamma}$, we ensure that changes in the weights across the grid remain smooth. In the limiting scenario where the penalty $\lambda_\gamma \to \infty$, the third-order differences of $\boldsymbol{\gamma}$ are heavily restricted. Similar to how an infinite $\lambda_\alpha$ reduces the baseline to a normal distribution, heavily penalizing $\boldsymbol{\gamma}$ restricts the covariate's influence to inducing only smooth, lower-order transformations (such as simple location or variance shifts of a normal baseline). This robust regularization prevents the model from fitting overly erratic, artifactual shapes.

\begin{theorem}[Penalized Score and Observed Information in Shape-Morphing Model]
The penalized score vector partitions into:
\begin{align*}
\nabla_{\alpha} l_p &= \boldsymbol{C}^\top \left[ (\boldsymbol{Q} - \boldsymbol{W})^\top \boldsymbol{1}_N - \lambda_\alpha \boldsymbol{D}^\top \boldsymbol{D} \boldsymbol{C} \boldsymbol{\alpha} \right]\\
\nabla_{\gamma} l_p &= \boldsymbol{C}^\top \left[ (\boldsymbol{Q} - \boldsymbol{W})^\top \boldsymbol{z} - \lambda_\gamma \boldsymbol{D}^\top \boldsymbol{D} \boldsymbol{C} \boldsymbol{\gamma} \right]
\end{align*}

Let $\boldsymbol{Z} = \operatorname{diag}(\boldsymbol{z})$ and $\boldsymbol{Z}^2 = \operatorname{diag}(\boldsymbol{z} \circ \boldsymbol{z})$. The blocks of the penalized Observed Information Matrix are:
\begin{align*}
-\nabla^2_{\alpha\alpha} l_p &= \boldsymbol{C}^\top \left[ \boldsymbol{Q}^\top \boldsymbol{Q} - \boldsymbol{W}^\top \boldsymbol{W} - \operatorname{diag}((\boldsymbol{Q} - \boldsymbol{W})^\top \boldsymbol{1}_N) + \lambda_\alpha \boldsymbol{D}^\top \boldsymbol{D} \right] \boldsymbol{C}\\
-\nabla^2_{\gamma \gamma} l_p &= \boldsymbol{C}^\top \left[ \boldsymbol{Q}^\top \boldsymbol{Z}^2 \boldsymbol{Q} - \boldsymbol{W}^\top \boldsymbol{Z}^2 \boldsymbol{W} - \operatorname{diag}((\boldsymbol{Q} - \boldsymbol{W})^\top (\boldsymbol{z} \circ \boldsymbol{z})) + \lambda_\gamma \boldsymbol{D}^\top \boldsymbol{D} \right] \boldsymbol{C}\\
-\nabla^2_{\alpha\gamma} l_p &= \boldsymbol{C}^\top \left[ \boldsymbol{Q}^\top \boldsymbol{Z} \boldsymbol{Q} - \boldsymbol{W}^\top \boldsymbol{Z} \boldsymbol{W} - \operatorname{diag}((\boldsymbol{Q} - \boldsymbol{W})^\top \boldsymbol{z}) \right] \boldsymbol{C}
\end{align*}
\end{theorem}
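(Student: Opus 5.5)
The plan is to mirror the proof of Theorem 1, now with study-specific weights. Write $l_i = \log \sum_k w_{ik}\phi_{ik}$, where $\phi_{ik}=\phi(y_i\mid\mu_k,\tau_c^2+v_i)$ does not depend on the parameters. The key observation is that $w_{ik}$ is a softmax of the linear predictor $\alpha^*_k + z_i\gamma^*_k$. So $\partial w_{ik}/\partial\alpha^*_j = w_{ik}(\delta_{kj}-w_{ij})$ and $\partial w_{ik}/\partial\gamma^*_j = z_i\, w_{ik}(\delta_{kj}-w_{ij})$. Hence derivatives with respect to $\boldsymbol{\gamma}^*$ are those with respect to $\boldsymbol{\alpha}^*$ multiplied by $z_i$.

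\textbf{Score.} Repeating the Theorem 1 computation with $q_{ik} = w_{ik}\phi_{ik}/f(y_i)$ gives
\[
\nabla_{\alpha^*} l_i = \boldsymbol{q}_i - \boldsymbol{w}_i, \qquad \nabla_{\gamma^*} l_i = z_i(\boldsymbol{q}_i-\boldsymbol{w}_i).
\]
Summing over $i$ gives $(\boldsymbol{Q}-\boldsymbol{W})^\top\boldsymbol{1}_N$ and $(\boldsymbol{Q}-\boldsymbol{W})^\top\boldsymbol{z}$. The chain rule through $\boldsymbol{\alpha}^*=\boldsymbol{C}\boldsymbol{\alpha}$ and $\boldsymbol{\gamma}^*=\boldsymbol{C}\boldsymbol{\gamma}$ premultiplies each by $\boldsymbol{C}^\top$. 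Each quadratic penalty contributes its gradient, $-\lambda\boldsymbol{C}^\top\boldsymbol{D}^\top\boldsymbol{D}\boldsymbol{C}(\cdot)$.

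\textbf{Hessian.} Differentiate $\boldsymbol{q}_i-\boldsymbol{w}_i$ again. As in Theorem 1, $\boldsymbol{q}_i$ is itself a softmax in $\boldsymbol{\alpha}^*$: its log-weights are $\alpha^*_k + z_i\gamma^*_k + \log\phi_{ik}$ minus a constant. Therefore its Jacobian with respect to $\boldsymbol{\alpha}^*$ is $\operatorname{diag}(\boldsymbol{q}_i)-\boldsymbol{q}_i\boldsymbol{q}_i^\top$, and with respect to $\boldsymbol{\gamma}^*$ it is $z_i$ times that. The same holds for $\boldsymbol{w}_i$. The second derivative of $l_i$ is therefore
\[
z_i^{a+b}\bigl[\operatorname{diag}(\boldsymbol{q}_i-\boldsymbol{w}_i) - \boldsymbol{q}_i\boldsymbol{q}_i^\top + \boldsymbol{w}_i\boldsymbol{w}_i^\top\bigr],
\]
where $a+b\in\{0,1,2\}$ counts the number of $\gamma$-derivatives.

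Summing over $i$ uses the identities
\[
\sum_i z_i^m \boldsymbol{q}_i\boldsymbol{q}_i^\top=\boldsymbol{Q}^\top\boldsymbol{Z}^m\boldsymbol{Q}, \qquad \sum_i z_i^m\boldsymbol{w}_i\boldsymbol{w}_i^\top=\boldsymbol{W}^\top\boldsymbol{Z}^m\boldsymbol{W}, \qquad \sum_i z_i^m(\boldsymbol{q}_i-\boldsymbol{w}_i) = (\boldsymbol{Q}-\boldsymbol{W})^\top(\boldsymbol{z}^{\circ m}).
\]
Negating yields the three bracketed expressions. The penalty adds $\lambda_\alpha\boldsymbol{D}^\top\boldsymbol{D}$ and $\lambda_\gamma\boldsymbol{D}^\top\boldsymbol{D}$ to the diagonal blocks and nothing to the cross block, because the two penalties are separable. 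Sandwiching between $\boldsymbol{C}^\top$ and $\boldsymbol{C}$ completes the argument.

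The only delicate point is the bookkeeping for $\boldsymbol{q}_i$. One must confirm that $f(y_i)$ normalizes it exactly like a softmax, so that the derivative of $f$ produces the $-\boldsymbol{q}_i\boldsymbol{q}_i^\top$ term. Equivalently, one can check directly that
\[
\partial \log f(y_i)/\partial\alpha^*_j = q_{ij}-w_{ij},
\]
so that
\[
\partial q_{ik}/\partial\alpha^*_j = q_{ik}\bigl[(\delta_{kj}-w_{ij}) - (q_{ij}-w_{ij})\bigr]=q_{ik}(\delta_{kj}-q_{ij}).
\]
The cancellation of the $w_{ij}$ terms here is the step to verify carefully; the rest is matrix assembly.
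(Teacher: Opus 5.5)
Your proposal is correct and takes essentially the same route as the paper. Both differentiate through the linear predictor $\alpha^*_k + z_i\gamma^*_k$, use the softmax Jacobians of $\boldsymbol{q}_i$ and $\boldsymbol{w}_i$, pick up one factor of $z_i$ per $\gamma$-derivative, sum over studies, and sandwich with $\boldsymbol{C}$. Your explicit check that $\partial q_{ik}/\partial\alpha^*_j = q_{ik}(\delta_{kj}-q_{ij})$, along with the summation identities, spells out steps the paper only asserts.
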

\begin{proof}
Similar to Theorem 1, the derivative of the log-likelihood for study $i$ with respect to the linear predictor terms $\alpha_k^* + z_i \gamma_k^*$ is $q_{ik} - w_{ik}$. By the chain rule, the derivative with respect to $\alpha_k^*$ is $q_{ik} - w_{ik}$, and with respect to $\gamma_k^*$ it is $z_i(q_{ik} - w_{ik})$. Summing over all studies and applying the constraint matrix $\boldsymbol{C}$ and respective penalties yields the specified score vectors.

For the Hessian, we differentiate $\boldsymbol{q}_i - \boldsymbol{w}_i$ with respect to $\boldsymbol{\alpha}^*$ and $\boldsymbol{\gamma}^*$. The derivative with respect to $\boldsymbol{\alpha}^*$ is $\operatorname{diag}(\boldsymbol{q}_i - \boldsymbol{w}_i) - \boldsymbol{q}_i \boldsymbol{q}_i^\top + \boldsymbol{w}_i \boldsymbol{w}_i^\top$, which sums to the $\alpha\alpha$ block. Differentiating with respect to $\boldsymbol{\gamma}^*$ introduces the inner derivative $z_i$, resulting in $\operatorname{diag}(\boldsymbol{q}_i - \boldsymbol{w}_i)z_i^2 - (\boldsymbol{q}_i \boldsymbol{q}_i^\top - \boldsymbol{w}_i \boldsymbol{w}_i^\top)z_i^2$, yielding the $\gamma\gamma$ block containing $\boldsymbol{Z}^2$. The cross-derivative block $\alpha\gamma$ similarly introduces a single $z_i$ factor, resulting in the block containing $\boldsymbol{Z}$.
\end{proof}

Consistent with the preceding models, estimation relies on the Newton-Raphson algorithm. To accurately tune the model's complexity, we perform a two-dimensional grid search over combinations of the smoothing parameters $\lambda_\alpha$ and $\lambda_\gamma$ to find the optimal pair that minimizes the AIC. Once the penalized maximum likelihood estimates ($\hat{\boldsymbol{\alpha}}$ and $\hat{\boldsymbol{\gamma}}$) are obtained, we can derive conditional estimates for any specific target covariate value $z$. The conditional mixture weights are evaluated using the estimated linear predictor: $\hat{w}(z)_k = \frac{\exp(\hat{\alpha}^*_k + z \hat{\gamma}^*_k)}{\sum_{l=1}^K \exp(\hat{\alpha}^*_l + z \hat{\gamma}^*_l)}$.

With these conditional weights established, generating context-specific distributional profiles becomes straightforward. Any conditional distributional quantity is computed simply by substituting the conditional weights $\hat{\boldsymbol{w}}(z)$ directly into the base formulas introduced in the intercept-only model. For example, the conditional mean is estimated as $\hat{\mu}(z) = \boldsymbol{\mu}^\top \hat{\boldsymbol{w}}(z)$. More importantly, this framework allows researchers to evaluate quantities that depend on the full shape of the distribution. For instance, researchers can estimate the conditional probability of an adverse effect ($P(\theta < 0 \mid z)$), allowing them to answer questions such as: ``What is the probability of this treatment causing harm when administered to an older population ($z = \text{high}$)?'' Similarly, one could evaluate the conditional 10th percentile of effects ($q_{0.10} \mid z$) by finding the root that inverts the conditional CDF.

The power of this conditional estimation lies in its ability to explicitly evaluate how the entire shape of the true effects distribution mutates across different contexts. Rather than merely tracking mean differences, researchers can compute the change in any targeted quantity between two distinct covariate levels, $z_1$ and $z_2$. For example, by comparing the conditional variances ($\tau^2$), researchers can test if one setting produces more consistent results than another. Furthermore, by comparing tail probabilities, researchers can directly assess whether changing a study characteristic significantly reduces the risk of negative outcomes ($\Delta P(\theta < 0)$). This transforms meta-regression from a tool that merely tracks the center of the data into one that evaluates changes in the safety and reliability of interventions across contexts.

While the mathematical framework of the shape-morphing model can theoretically be extended to multiple covariates, doing so introduces significant practical hurdles. Introducing just one shape covariate immediately doubles the number of free mixture parameters (from $\boldsymbol{\alpha}$ alone to both $\boldsymbol{\alpha}$ and $\boldsymbol{\gamma}$). Adding more covariates linearly multiplies this vast parameter space, severely risking overfitting, especially given the sample sizes typical of meta-analyses. Furthermore, model selection becomes exponentially harder: tuning the model would require evaluating a higher-dimensional grid of penalty parameters (e.g., searching across $\lambda_\alpha, \lambda_{\gamma_1}, \lambda_{\gamma_2}, \dots$). Such an expansive search is practically unachievable and computationally prohibitive unless novel, more efficient model selection algorithms or sparsity-inducing simplifications are developed.

\subsection{Handling Dependent Effect Sizes}
\label{sec:handling_dependency}

In large social science meta-analyses, it is exceptionally common to encounter dependent effect sizes. This complexity arises from various nested or crossed data structures, such as when multiple correlated outcomes are extracted from the same sample, when several treatment arms share a common control group, when outcomes are measured at multiple follow-up time points, or when a single research laboratory publishes multiple related studies. Treating these grouped observations as fully independent violates fundamental statistical assumptions, which can lead to underestimated standard errors, spuriously narrow confidence intervals, and inflated Type I error rates.

To account for effect size dependence without sacrificing the distributional flexibility of our model, we adopt an approach that combines a working independence model with Cluster-Robust Variance Estimation (CRVE). In essence, we estimate the parameters of the Penalized Gaussian Mixture by maximizing a composite, or pseudo-likelihood~\citep{Varin2011}. During the optimization phase, we treat the effect sizes as if they were independent, and we subsequently correct the standard errors post hoc to account for the correlation.

One might naturally ask why we do not specify a more nuanced working model that explicitly models the complex dependence structure. In traditional normal-normal meta-analysis, researchers often use multivariate extensions (e.g., multilevel or multivariate normal models) to explicitly partition heterogeneity into within-cluster and between-cluster components. However, attempting to extend the PGM framework to a multivariate setting is practically infeasible. Building flexible mixtures in higher-dimensional spaces introduces severe computational limitations due to the curse of dimensionality, requiring an exponentially larger grid of basis functions. Furthermore, explicitly modeling the joint distribution requires strong, often unjustified assumptions about the true Data Generating Process (DGP) of the dependence structure, which is rarely known in complex meta-analytic datasets.

By intentionally choosing a pseudo-likelihood approach based on a marginal working model, we accept a specific methodological trade-off. The primary downside is that we are limited to estimating the flexible marginal distribution of the true effects; we cannot partition the heterogeneity into distinct hierarchical levels, for example, between-study and within-study between-effect-size heterogeneity. However, in the context of summarizing highly heterogeneous interventions, researchers are predominantly interested in characterizing the overall, marginal distribution of the entire population of effects anyway. For this primary objective, the working independence model coupled with CRVE serves exceptionally well. It sidesteps intractable computational costs, avoids unjustified assumptions about the joint DGP, and provides valid inference for the marginal shape of the distribution.

Formally, suppose the $N$ total effect sizes are nested within $M$ independent clusters (e.g., independent studies or independent samples). Let $C_m$ denote the set of observation indices belonging to cluster $m$, where $m = 1, \dots, M$. Let $\boldsymbol{\eta}$ denote the full vector of parameters (e.g., $[\boldsymbol{\alpha}^\top, \boldsymbol{\beta}^\top]^\top$ or $[\boldsymbol{\alpha}^\top, \boldsymbol{\gamma}^\top]^\top$). Recall that under strict independence, the covariance matrix is estimated via $\boldsymbol{V}_\eta = \boldsymbol{J}_p^{-1} \boldsymbol{J} \boldsymbol{J}_p^{-1}$. Under dependency, we extend this to a robust empirical sandwich estimator.

For the $i$-th effect, let $\boldsymbol{s}_i$ be its specific vector contribution to the unpenalized score (gradient). We aggregate these score vectors within each cluster to form a cluster-level score vector $\boldsymbol{s}_m^{(c)}$:
\[\boldsymbol{s}_m^{(c)} = \sum_{i \in C_m} \boldsymbol{s}_i\]

The cluster-robust ``meat'' matrix, $\boldsymbol{J}_{CR}$, is then constructed as the sum of the outer products of these aggregated cluster-level scores across all $M$ clusters:
\[\boldsymbol{J}_{CR} = \sum_{m=1}^M \boldsymbol{s}_m^{(c)} {\boldsymbol{s}_m^{(c)}}^\top\]

Finally, the robust variance-covariance matrix for the parameter estimates replaces the unpenalized meat matrix $\boldsymbol{J}$ with the cluster-robust version $\boldsymbol{J}_{CR}$:
\[\boldsymbol{V}_{CR} = \boldsymbol{J}_p^{-1} \boldsymbol{J}_{CR} \boldsymbol{J}_p^{-1}\]

By utilizing $\boldsymbol{J}_{CR}$ in the center of the sandwich estimator, this adjustment ensures asymptotically valid inference under an unknown within-cluster correlation structure. The resulting robust covariance matrix $\boldsymbol{V}_{CR}$ can then be substituted directly into the Delta method derivations to yield cluster-robust confidence intervals for all target quantities, including the mean, variance, and tail probabilities.

\section{Simulation Study}

\subsection{Simulation 1: Independent Effect Sizes}

\subsubsection{Simulation Setup}

To evaluate the performance of the proposed Penalized Gaussian Mixture (PGM) method relative to the standard random-effects meta-analysis (normal-normal model), we conducted two sets of simulations. In this first simulation study, we focus on the foundational case of strictly independent effect sizes. The primary objective is to assess how well each method recovers the true underlying distribution of true effects under varying conditions of sample size, $I^2$, and distributional shape, establishing a baseline before introducing complex dependent data structures.

We generated data mimicking the structure of typical meta-analyses. The true effect sizes $\theta_i$ were generated from one of three distinct distributions to represent different underlying realities. Notably, these chosen distributions are exactly the same three distributions depicted in Figure \ref{fig:distributions}. Crucially, we fixed the overall mean and variance for all three distributions to be exactly 1 ($\mu=1$, $\tau^2=1$). The first is a Normal distribution, $\theta_i \sim \mathcal{N}(1, 1)$, serving as a baseline where standard conventional assumptions hold true. The second is a shifted Log-Normal distribution, representing a scenario with significant right-skewness. Specifically, we simulated an initial variable from a log-normal distribution with log-mean $\mu_{\log} = 0$ and log-sd $\sigma_{\log} = 0.5$, and then shifted and scaled it to ensure the final distribution has a mean of 1 and a variance of 1. The third is a bimodal Mixture of two normal distributions, creating a shape with a dominant positive cluster and a smaller negative cluster. This was parameterized as $0.3\mathcal{N}(\mu_1, \sigma_{\text{mix}}^2) + 0.7\mathcal{N}(\mu_2, \sigma_{\text{mix}}^2)$, where $\mu_1 = 1 - 2.1/\sqrt{2.14}$, $\mu_2 = 1 + 0.9/\sqrt{2.14}$, and $\sigma_{\text{mix}} = 0.5/\sqrt{2.14}$.

We simulated $v_i$ directly from a Gamma distribution. In a random-effects meta-analysis, the $I^2$ statistic can be approximated by $I^2 = \tau^2 / (\tau^2 + \bar{v})$, where $\bar{v}$ is the harmonic mean of the within-study variances, given by $1/E[1/v_i]$. Because we fixed the between-study variance at $\tau^2 = 1$, we can manipulate $I^2$ strictly by changing the distribution of $v_i$. To achieve $I^2$ values of 80\%, 50\%, and 20\%, we require $\bar{v}$ target values of 0.25, 1, and 4, respectively. We accomplished this by drawing $v_i \sim \text{Gamma}(\text{shape} = 10, \text{scale} = \theta)$, where the scale parameter $\theta \in \{0.25/9, 1/9, 4/9\}$. Because the harmonic mean of this specific Gamma distribution simplifies directly to $(\text{shape}-1)\theta = 9\theta$, this process exactly yields the desired $\bar{v}$ target values.

We varied the number of studies $N \in \{25, 50, 100, 200, 400\}$ to cover the typical range from small to very large meta-analyses in the social sciences. This resulted in a total of $5 \times 3 \times 3 = 45$ distinct simulation scenarios. For each scenario, we performed 10,000 replications.

We implemented the standard random-effects model using Maximum Likelihood (ML) estimation via the \texttt{metafor} package~\citep{Viechtbauer2010} in R as the conventional benchmark. We specifically selected the ML estimator for the normal model to ensure a fair comparison, as our proposed PGM framework is fundamentally a penalized likelihood method. For the proposed framework, we developed an open-source R package, \texttt{metapgm}, which can be easily installed via the command \texttt{pak::pkg\_install("daihesui/metapgm")}. The PGM model was fitted using the \texttt{rma.pgm} function from this package. To maximize the penalized likelihood, our package utilizes a trust-region optimization algorithm via the \texttt{trust} package~\citep{Geyer2026}, which offers superior numerical stability and convergence reliability compared to standard Newton-Raphson implementations.

For the PGM specifications, we dynamically selected the mixture complexity to scale with the available data, setting the number of components to $K = \text{round}(10 \log_{10}(N))$ to provide an appropriately flexible basis. We enforced smoothness using a third-order difference penalty ($d=3$). The optimal smoothing parameter $\lambda_\alpha$ was selected by minimizing the Akaike Information Criterion (AIC) across a predefined grid of values on a logarithmic scale: $e^{-5}, e^{-4}, \dots, e^5$.

To comprehensively assess parameter accuracy and inference quality, we evaluated seven specific targets for which the true values are exactly known. First, we calculated the Root Mean Square Error (RMSE) and the empirical coverage rate of the 95\% confidence intervals for the estimates of the overall mean ($\mu$) and the between-study variance ($\tau^2$); their true values are both exactly 1. Furthermore, to evaluate the models' practical utility in risk assessment and capturing the full distribution shape, we measured their accuracy and coverage in estimating tail probabilities. For each distribution, we established thresholds at the true 10th, 25th, 50th, 75th, and 90th percentiles ($q_{0.10}, q_{0.25}, q_{0.50}, q_{0.75}, q_{0.90}$). We then estimated the cumulative probability of effects falling below these thresholds, $P(\theta < q_p)$. Because these thresholds correspond to exact underlying quantiles, the target probabilities are precisely 0.10, 0.25, 0.50, 0.75, and 0.90. We computed the RMSE and the 95\% confidence interval coverage for these five distinct probability estimates. For the conventional normal model, the confidence intervals for these tail probabilities were constructed utilizing the methods described by~\citet{Mathur2019}.

Finally, to measure the global accuracy of the estimated density shape, we computed the Median Integrated Absolute Error (MIAE) between the estimated density $\hat{f}(\theta)$ and the true density $f(\theta)$. For each replication, we first calculated the Integrated Absolute Error (IAE):
\[\text{IAE} = \int |\hat{f}(\theta) - f(\theta)| d\theta.\]
The MIAE for a given scenario was then obtained by taking the median of these IAE values across all 10,000 replications. All simulation code and the full results are available at \url{https://osf.io/2hzme}.

\subsubsection{Simulation Results}

The simulation study evaluates the performance of the Penalized Gaussian Mixture (PGM) framework against the standard normal-normal approach (Normal-ML) across varying sample sizes, $I^2$, and true underlying distributions. The results demonstrate the distinct advantages of the PGM model in non-normal scenarios, highlighting critical failures in standard inference when underlying assumptions are violated.

We first examine the basic summary statistics: the overall mean ($\mu$) and the between-study variance ($\tau^2$). A common concern with flexible, over-parameterized models is that they might lose accuracy when estimating these fundamental parameters. However, the results indicate that the proposed PGM method does not sacrifice point estimation accuracy in exchange for its flexibility. Figure \ref{fig:rmse_mean} displays the Root Mean Square Error (RMSE) of the mean estimates, while Figure \ref{fig:rmse_var} illustrates the RMSE of the variance. Across all conditions---Normal, Lognormal, and Mixture distributions---the PGM and Normal-ML models yield virtually identical error rates. For both methods, the errors consistently decay as the sample size grows from 25 to 400, indicating that the flexible PGM model successfully recovers the true underlying moments.

\begin{figure}[htbp]
    \centering
    \includegraphics[width=\textwidth]{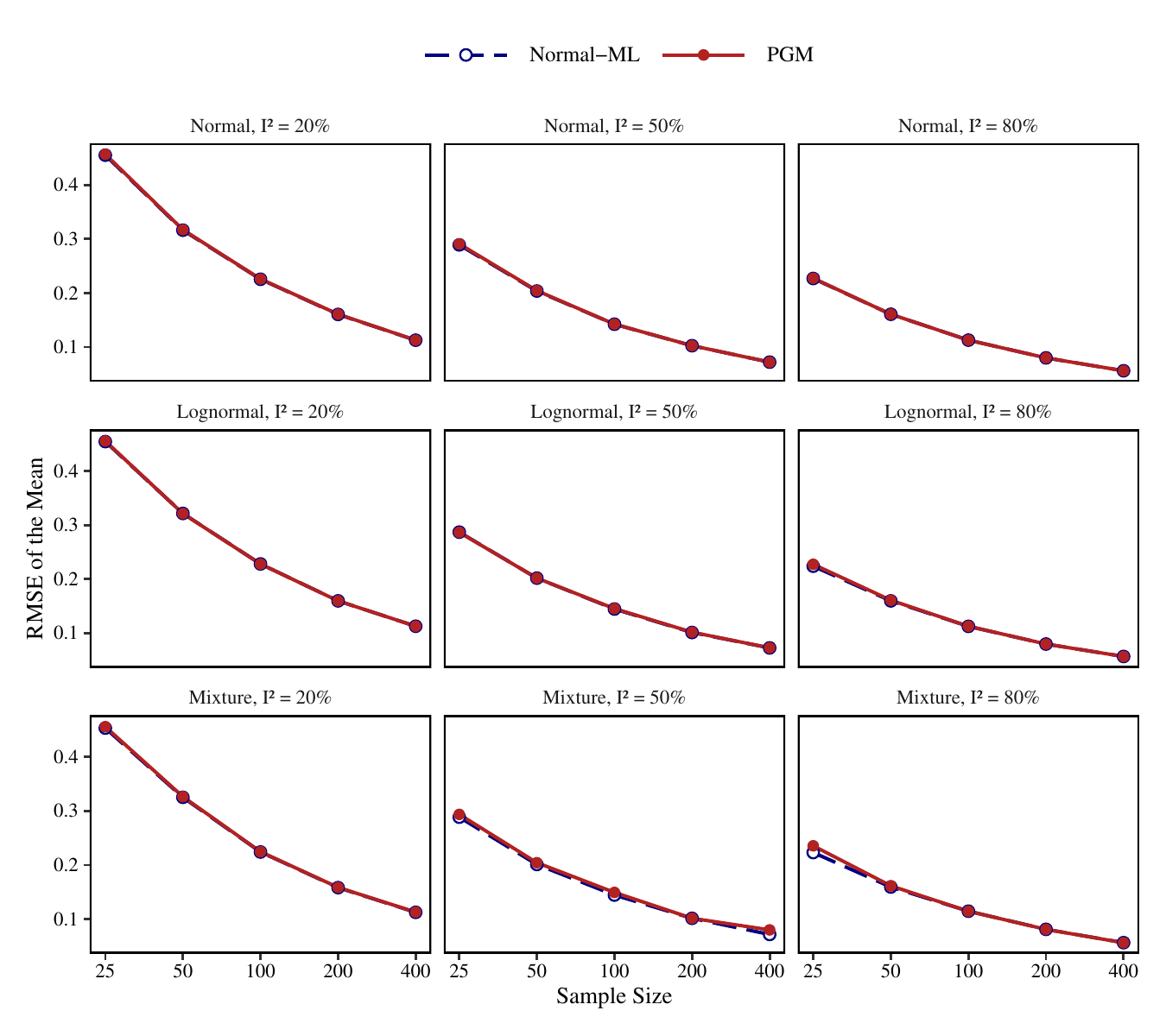}
    \caption{Root Mean Square Error (RMSE) of the estimated overall mean across varying true distributions, $I^2$, and sample sizes.}
    \label{fig:rmse_mean}
\end{figure}

\begin{figure}[htbp]
    \centering
    \includegraphics[width=\textwidth]{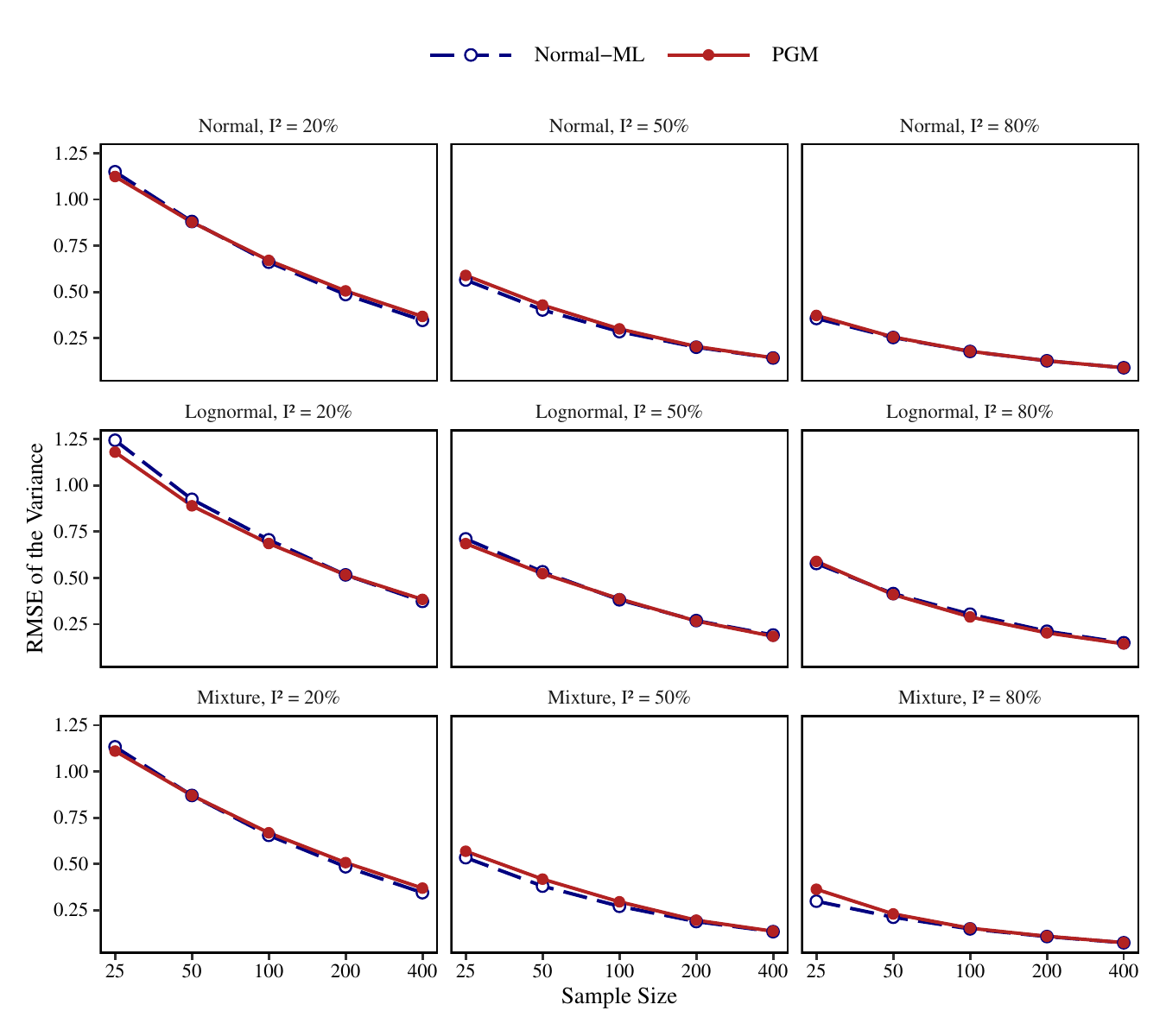}
    \caption{Root Mean Square Error (RMSE) of the estimated between-study variance ($\tau^2$) across varying true distributions, $I^2$, and sample sizes.}
    \label{fig:rmse_var}
\end{figure}

While the point estimates are highly comparable, differences emerge when evaluating statistical inference at finite sample sizes. Figure \ref{fig:cov_mean} presents the empirical coverage rates for the 95\% confidence intervals of the mean. The Normal-ML model maintains near-nominal coverage across most scenarios. The PGM model exhibits slight undercoverage at smaller sample sizes ($N=25$ and $50$), but successfully converges toward the nominal 0.95 level as the sample size increases to 100 and beyond. 

\begin{figure}[htbp]
    \centering
    \includegraphics[width=\textwidth]{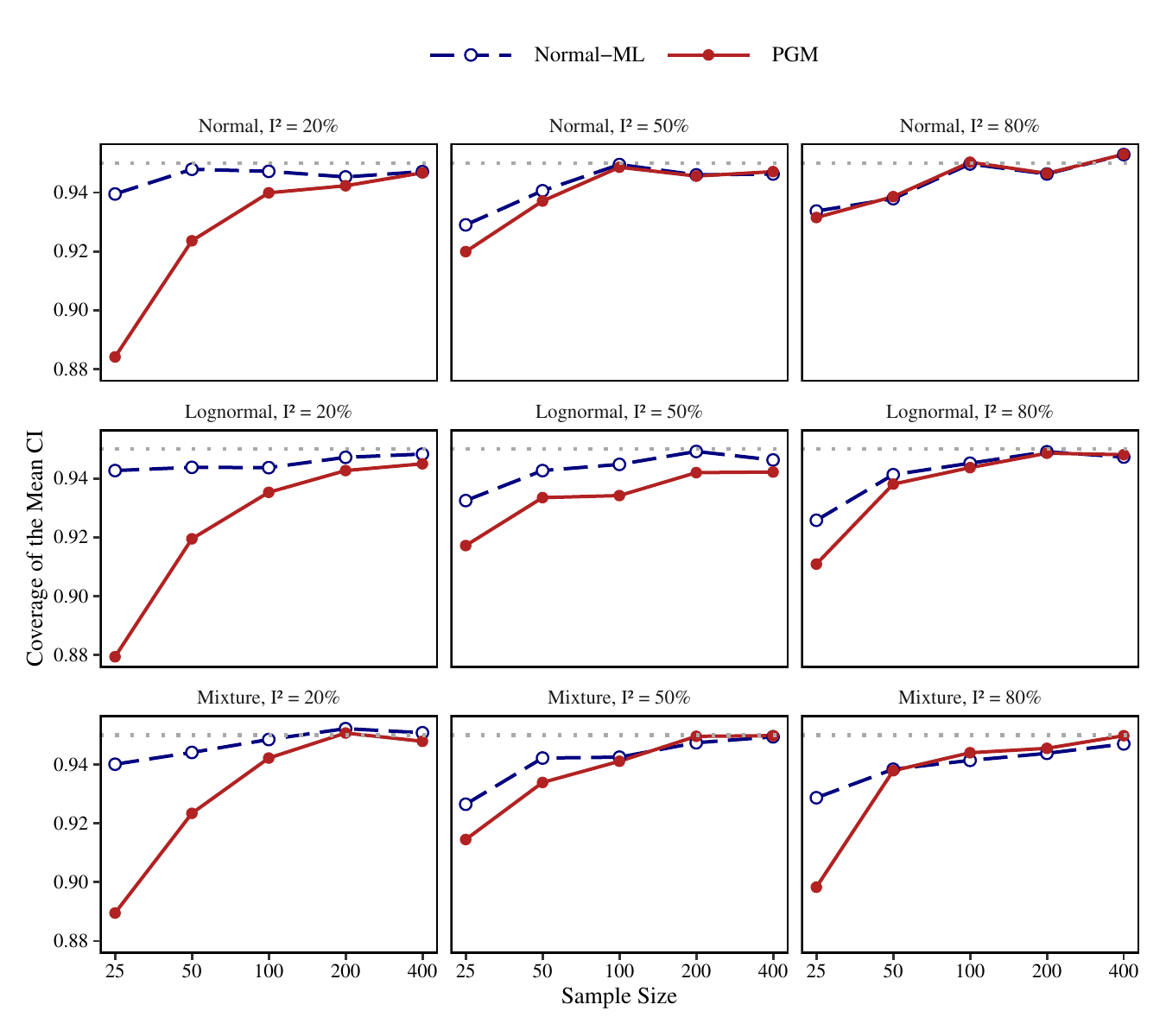}
    \caption{Empirical coverage rates of the 95\% confidence intervals for the overall mean across varying true distributions, $I^2$, and sample sizes.}
    \label{fig:cov_mean}
\end{figure}

This finite-sample problem is more pronounced for the variance parameter, and the behavior varies significantly depending on the true underlying distribution, as shown in Figure \ref{fig:cov_var}. The confidence intervals of the variance from the PGM models suffer from notable undercoverage at small sample sizes. However, this coverage steadily climbs toward nominal levels as the sample size reaches 200 and 400. When the true effects are Normal, both the PGM and Normal-ML models converge to the nominal 0.95 level as the sample size increases. However, a stark divergence occurs in the non-normal scenarios. Under the heavily skewed Lognormal distribution, the Normal-ML model suffers from structural failure; its coverage flatlines---stagnating around 0.75 for high $I^2$ conditions---and completely fails to improve even at $N=400$, likely because the fundamental parametric assumption is violated. In contrast, while the PGM model experiences severe undercoverage at small sample sizes in this scenario, its coverage is responsive to data, systematically climbing toward the nominal level as $N$ grows. A similar finite-sample trajectory for the PGM model is observed in the bimodal Mixture distribution, where it reliably recovers to the 0.95 level at larger sample sizes, whereas the Normal-ML model sometimes yields artificially high, conservative coverage. This divergence highlights a trade-off: capturing complex, asymmetric, or bimodal distributional shapes requires estimating a larger number of parameters, which inherently demands more data to achieve reliable inference on second-order moments. Yet, unlike standard methods that remain permanently biased under misspecification, the PGM framework consistently converges toward accurate inference as the sample size grows.

\begin{figure}[htbp]
    \centering
    \includegraphics[width=\textwidth]{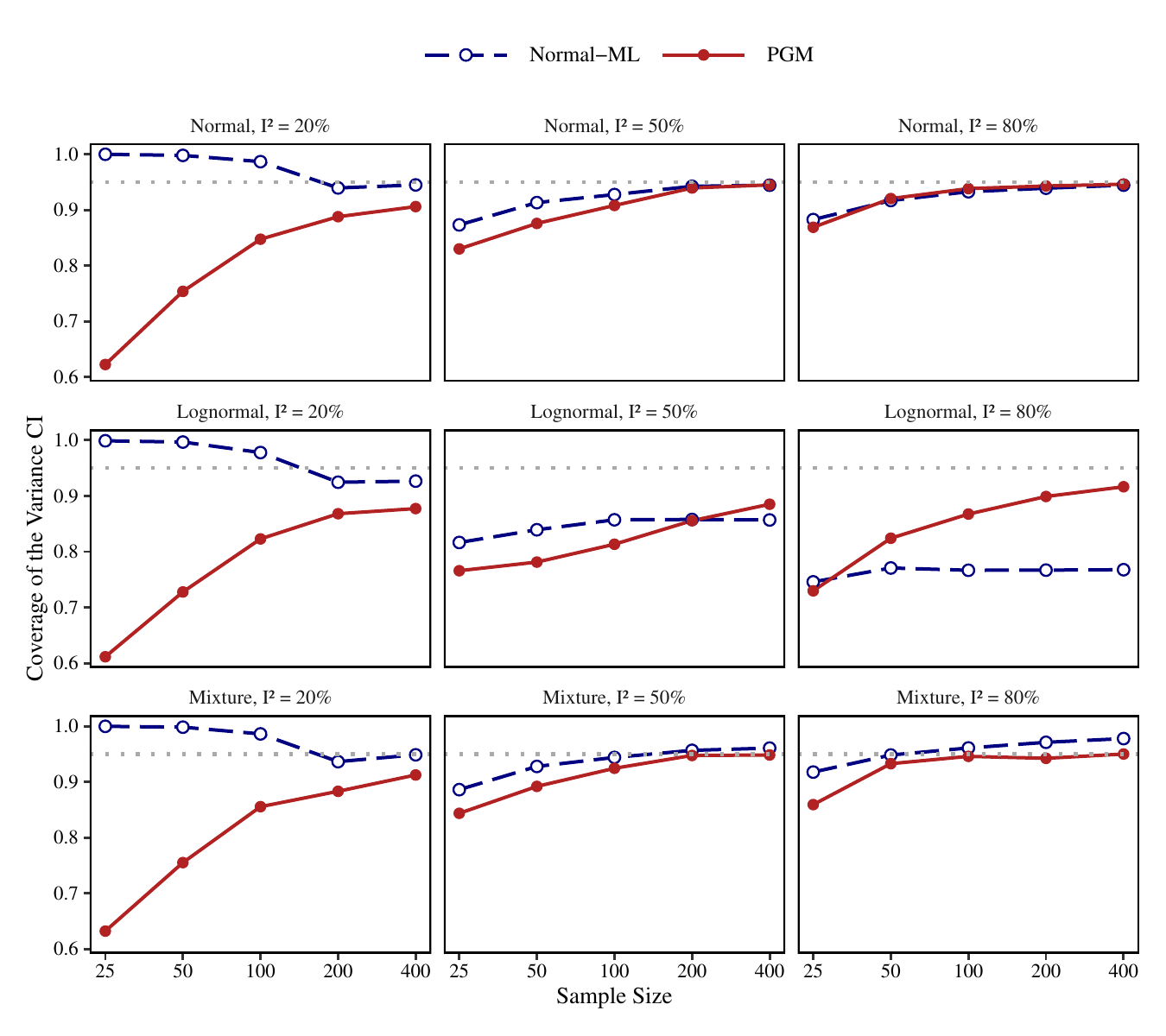}
    \caption{Empirical coverage rates of the 95\% confidence intervals for the between-study variance ($\tau^2$) across varying true distributions, $I^2$, and sample sizes.}
    \label{fig:cov_var}
\end{figure}

Although the standard normal model performs adequately for basic moments, the primary advantage of the PGM framework becomes undeniable when estimating distributional properties beyond the mean and variance. Figure \ref{fig:rmse_prob} presents the Average RMSE of Tail Probabilities, computed across the true 10th, 25th, 50th, 75th, and 90th percentiles. Under the true Normal distribution, the Normal-ML model naturally serves as the optimal baseline, slightly outperforming PGM. However, when the true effects follow a Lognormal or Mixture distribution, the rigid symmetric assumptions of the Normal-ML model cause its RMSE to plateau; accumulating more studies fails to correct the model's fundamental misspecification. The PGM model dynamically adapts to the data, consistently refining its tail probability estimates and yielding significantly lower RMSE as $N$ increases.

\begin{figure}[htbp]
    \centering
    \includegraphics[width=\textwidth]{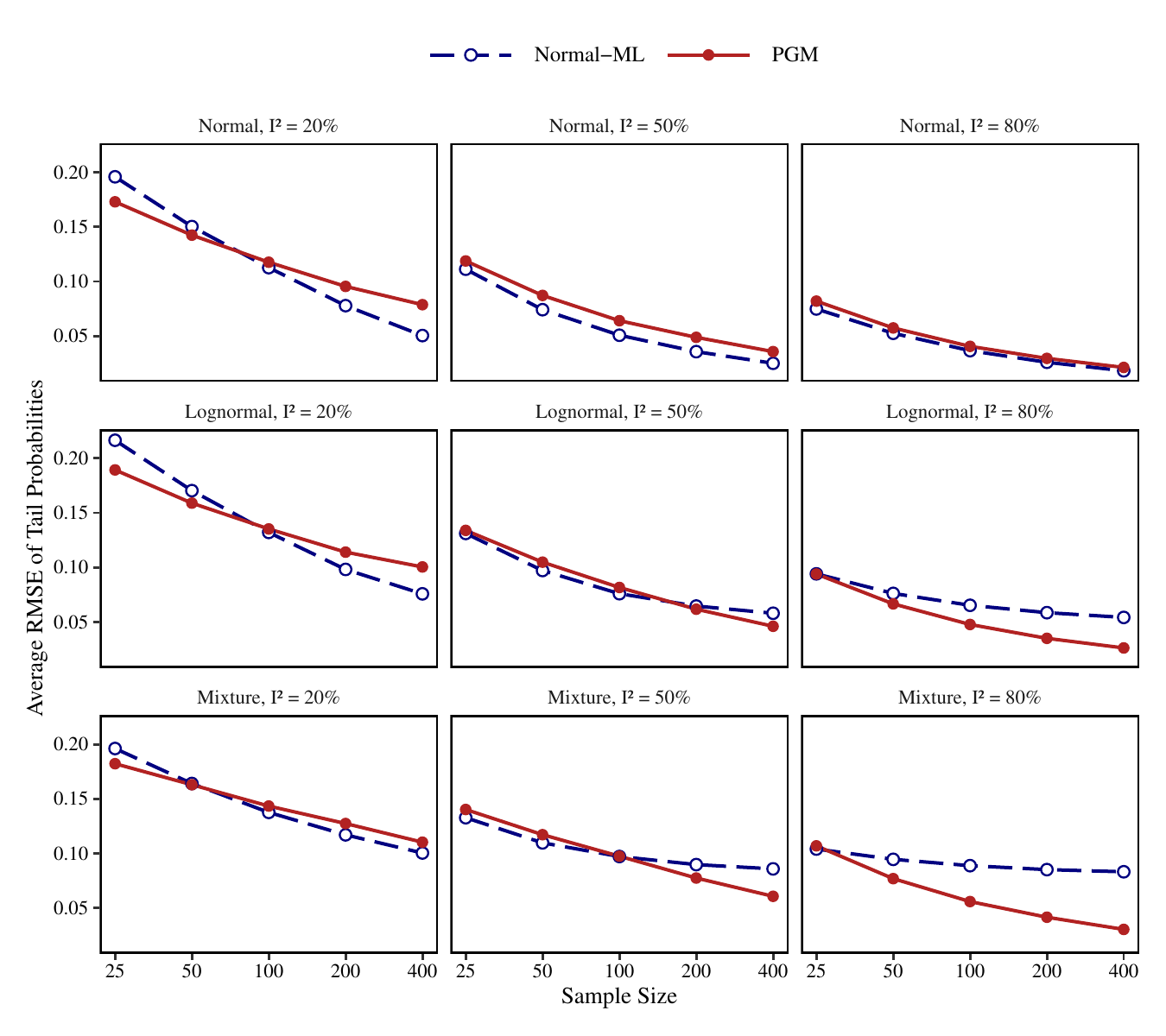}
    \caption{Average Root Mean Square Error (RMSE) of the estimated tail probabilities across varying true distributions, $I^2$, and sample sizes.}
    \label{fig:rmse_prob}
\end{figure}

The consequences of this misspecification are most drastically observed in the confidence interval coverage for these tail probabilities, shown in Figure \ref{fig:cov_prob}. In the Lognormal and Mixture scenarios, the coverage of the standard Normal-ML method completely collapses as the sample size increases, plunging as low as 0.2 at $N=400$. This occurs because the standard model becomes highly confident in an entirely incorrect, symmetric shape. In stark contrast, the PGM framework maintains robust, stable coverage near the nominal 0.95 level across these complex realities, proving essential for accurate risk assessment and subgroup detection.

\begin{figure}[htbp]
    \centering
    \includegraphics[width=\textwidth]{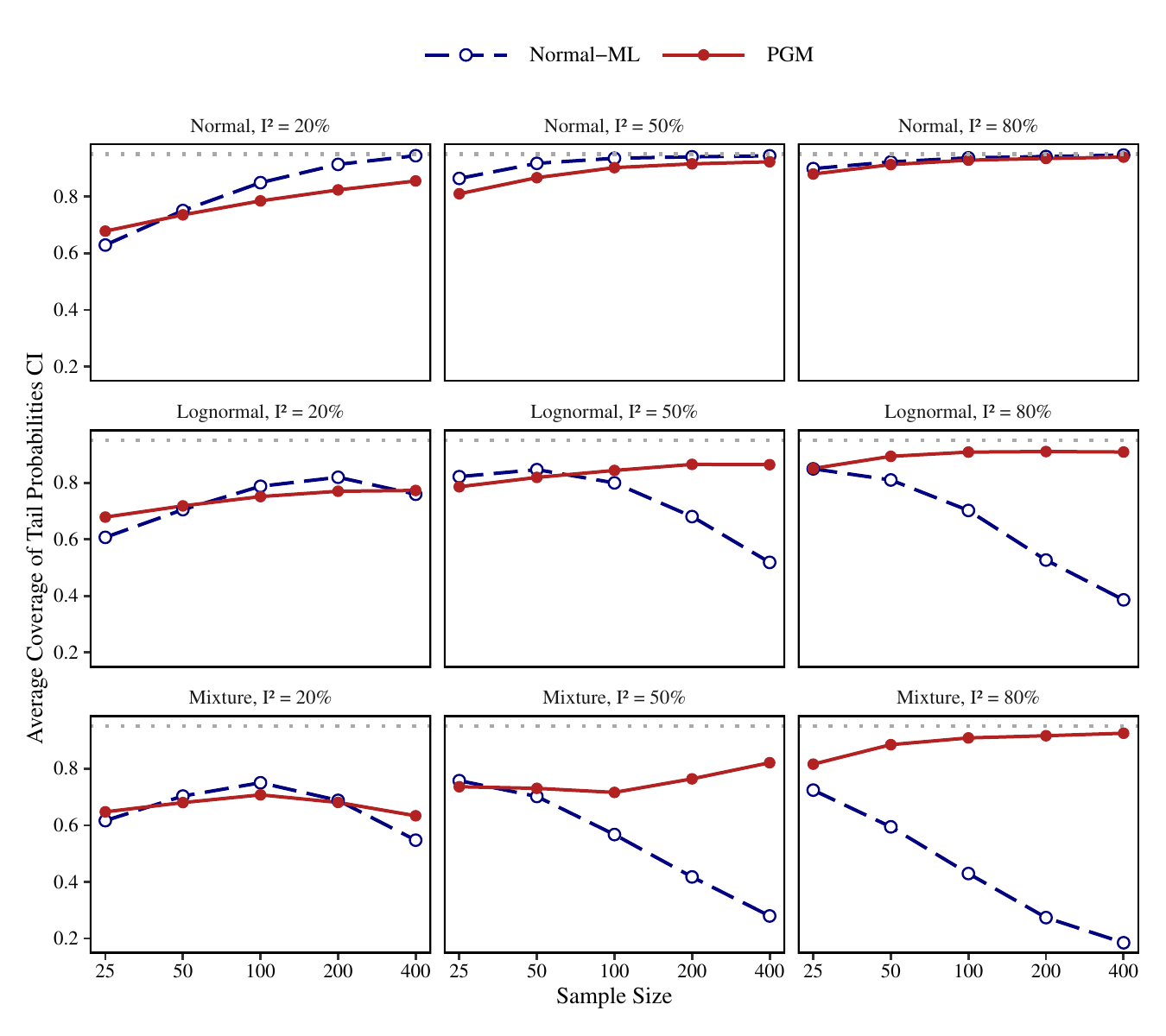}
    \caption{Average empirical coverage rates of the 95\% confidence intervals for the tail probabilities across varying true distributions, $I^2$, and sample sizes. Note the catastrophic collapse in coverage for the Normal-ML method in non-normal scenarios as sample size increases.}
    \label{fig:cov_prob}
\end{figure}

Finally, to quantify how well the models capture the holistic shape of the true effects, Figure \ref{fig:miae} plots the Median Integrated Absolute Error (MIAE) of the estimated probability density functions. As expected, when the underlying data is truly Normal, the parametric Normal-ML method achieves the lowest MIAE, though the PGM model adequately approaches this performance bound at larger sample sizes. However, the true cost of incorrectly assuming normality is starkly apparent in the non-normal conditions. For the asymmetric and bimodal distributions, the Normal-ML baseline exhibits severe bias, resulting in high MIAE values that remain elevated regardless of sample size. By utilizing its weighted Gaussian components, the PGM method successfully approximates the heavy skewness and distinct sub-populations, resulting in drastically lower MIAE scores that systematically decrease toward zero as the sample size grows.

\begin{figure}[htbp]
    \centering
    \includegraphics[width=\textwidth]{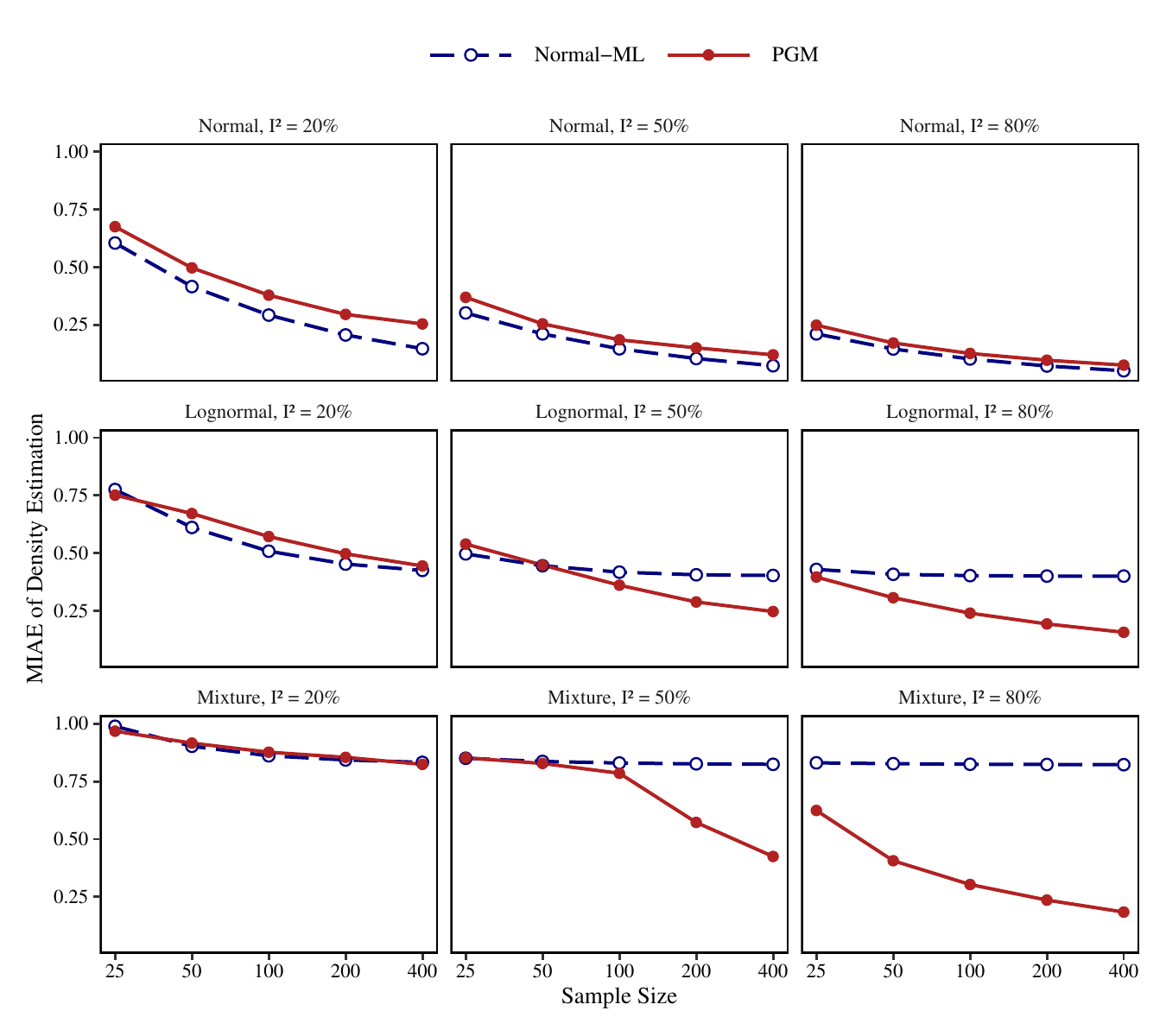}
    \caption{Median Integrated Absolute Error (MIAE) of the estimated probability density functions across varying true distributions, $I^2$, and sample sizes.}
    \label{fig:miae}
\end{figure}

\subsection{Simulation 2: Dependent Effect Sizes}

\subsubsection{Simulation Setup}

In social science meta-analyses, it is exceptionally common for multiple effect sizes to be extracted from the same primary study or cluster, creating a nested data structure. Our second simulation study aims to evaluate the performance of the proposed PGM framework in the presence of these dependent effect sizes. Specifically, we investigate whether combining a working independence PGM model with Cluster-Robust Variance Estimation (CRVE) as detailed in Section \ref{sec:handling_dependency} yields valid statistical inference, and we compare this approach against a naive PGM model that entirely ignores the clustering structure.

To simulate this dependency, we used the following data-generating process. Let $y_{ij}$ denote the $j$-th observed effect size in the $i$-th cluster. The data were generated according to the following model:
\[y_{ij} = \beta x_{i} + \sqrt{r} s_i + \sqrt{1-r} u_{ij} + e_{ij}\]
where the fixed-effect slope is set to $\beta = 1$, and the covariate $x_{i}$ is drawn from a standard normal distribution, $\mathcal{N}(0, 1)$.

The true effect heterogeneity is partitioned into two components: a cluster-level random effect $s_i$ and an effect-level random effect $u_{ij}$. To ensure a complex, non-normal reality, the cluster-level effect $s_i$ was generated from the same bimodal mixture distribution used in Simulation 1, but shifted to have a mean of 0 and a variance of 1. The effect-level effect $u_{ij}$ was generated from a standard normal distribution, $\mathcal{N}(0, 1)$. The parameter $r$ controls the proportion of the total true heterogeneity attributable to the cluster level, analogous to the intra-class correlation. We varied $r \in \{0, 0.4, 0.8\}$, representing scenarios ranging from perfect within-cluster independence to high within-cluster correlation. The sampling error $e_{ij}$ was drawn from $\mathcal{N}(0, v_{ij})$, where the within-study variances $v_{ij}$ were generated from the same Gamma distributions used in Simulation 1 to achieve target harmonic means, effectively producing approximate $I^2$ values of 20\%, 50\%, and 80\%.

We varied the number of independent clusters $M \in \{10, 20, 40, 80, 160\}$. To generate realistic cluster sizes, the number of effect sizes within each cluster was drawn using a shifted Poisson distribution. Specifically, we drew sizes from a Poisson distribution with a rate parameter of 3 and added 1 to each draw to make sure each cluster contains at least one effect size. This factorial design resulted in $5 \times 3 \times 3 = 45$ distinct scenarios. We performed 10,000 replications for each scenario.

For each generated dataset, we fit two meta-regression models to adjust for the covariate $x_{i}$. The first is the standard Penalized Gaussian Mixture (PGM) model, which naively assumes all effect sizes are strictly independent. The second, denoted as PGM-CRVE, is the proposed approach detailed in Section \ref{sec:handling_dependency} that fits the PGM using a working independence model but applies Cluster-Robust Variance Estimation post hoc to adjust the standard errors for the $M$ clusters.

Given that both the naive PGM and the PGM-CRVE models yield identical point estimates, our comparative evaluation focuses on the accuracy of their statistical inference. Accordingly, we evaluated the performance of the two models using two specific coverage metrics. First, we assessed the empirical coverage rate of the 95\% confidence interval for the fixed-effect slope ($\beta$). Second, we evaluated the empirical coverage for conditional tail probabilities. Similar to Simulation 1, we defined target thresholds at the true 10th, 25th, 50th, 75th, and 90th percentiles of the true marginal effects distribution evaluated at $x=0$. Because the true heterogeneity at $x=0$ is a convolution of a bimodal mixture ($s_i$) and a normal distribution ($u_{ij}$), we computed these exact true quantiles $q_p$ for each value of $r$ prior to evaluation. We then calculated the empirical coverage of the 95\% confidence intervals for these five conditional probabilities, $P(\theta < q_p \mid x=0)$. All simulation code and the full results are available at \url{https://osf.io/2hzme}.

\subsubsection{Simulation Results}

The second simulation study evaluates the effectiveness of the proposed cluster-robust variance estimation (CRVE) approach in handling dependent effect sizes in the PGM model. Figure \ref{fig:cov_beta_crve} presents the empirical coverage rates for the 95\% confidence intervals of the fixed-effect slope ($\beta$). As a baseline, when effect sizes are perfectly independent within clusters ($r=0$), the naive PGM model performs adequately, maintaining coverage near the nominal 0.95 level. The PGM-CRVE model, however, exhibits undercoverage at smaller cluster sizes in this independent scenario. This behavior is a finite-sample characteristic of the robust sandwich estimator, which relies on asymptotic properties. As the number of clusters increases, the PGM-CRVE coverage naturally aligns with the naive model, stabilizing at the nominal level.

However, as within-cluster correlation increases ($r=0.4$ and $r=0.8$), ignoring the dependency structure leads to severe consequences for statistical inference. The naive PGM model suffers from a catastrophic collapse in coverage for the slope. Because ignoring the clustering structure leads to severely underestimated standard errors, the naive confidence intervals are spuriously narrow. In contrast, the PGM-CRVE approach effectively corrects this structural issue. While the PGM-CRVE model still exhibits undercoverage when the number of clusters is small due to its asymptotic nature, it reliably recovers valid statistical inference and stabilizes at the nominal 0.95 level as the number of clusters grows.

\begin{figure}[htbp]
    \centering
    \includegraphics[width=\textwidth]{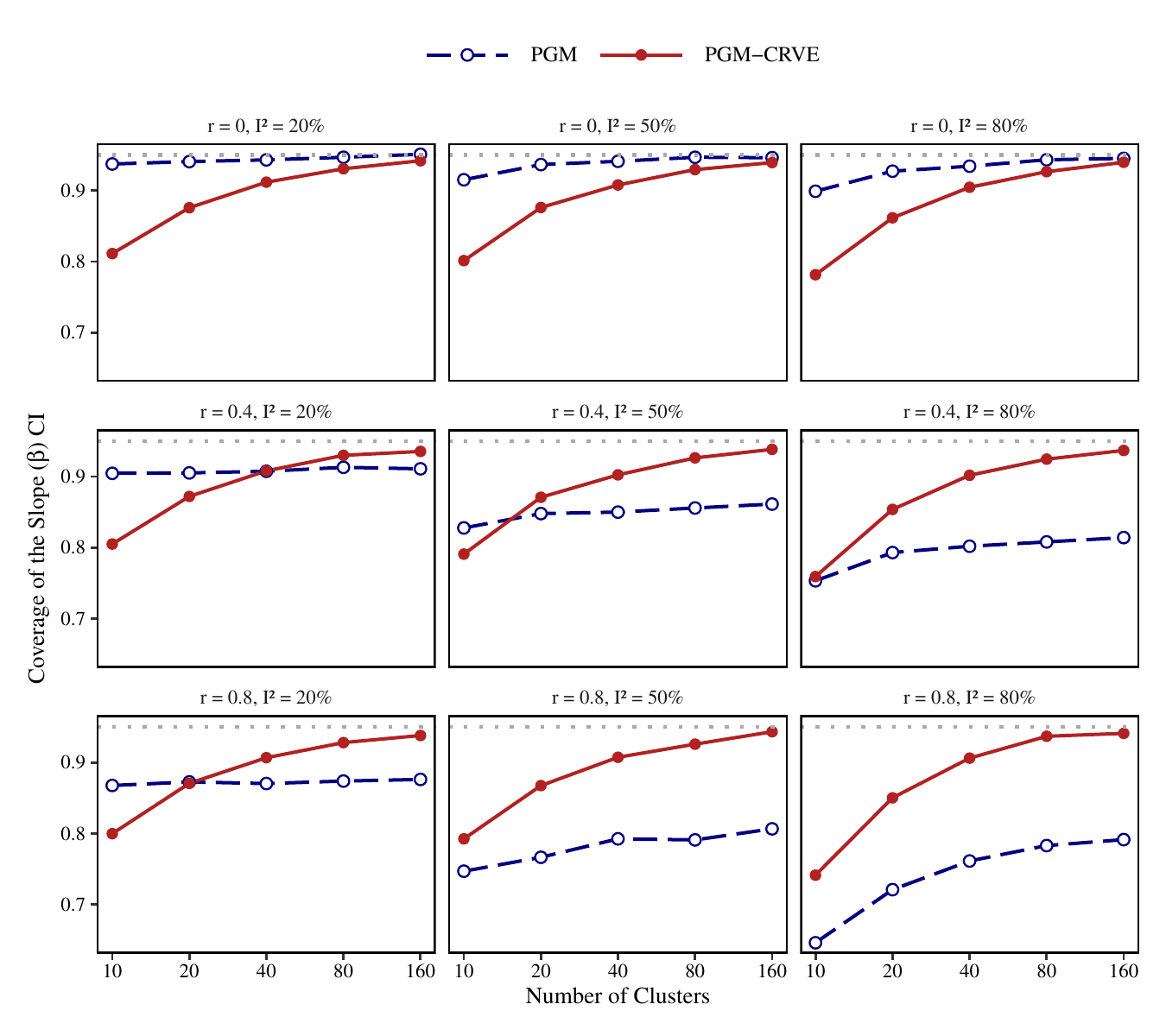}
    \caption{Empirical coverage rates of the 95\% confidence intervals for the fixed-effect slope ($\beta$) across varying degrees of within-cluster correlation ($r$), $I^2$, and number of clusters.}
    \label{fig:cov_beta_crve}
\end{figure}

The consequences of ignoring data dependency are similarly detrimental when estimating complex distributional properties. Figure \ref{fig:cov_prob_crve} displays the average empirical coverage for the 95\% confidence intervals of the conditional tail probabilities. Under moderate to high dependency, the coverage of the naive PGM model for these tail probabilities completely collapses. In the most extreme scenario evaluated, the naive model's coverage plunges to near 0.6, rendering any risk assessment or inference on probabilities highly misleading. Applying the robust empirical sandwich estimator is strictly necessary here to rescue the inference. Consistent with the slope estimates, the PGM-CRVE model's initial coverage is slightly tempered at 10 and 20 clusters due to finite-sample properties. However, by effectively accounting for the correlated data structure, it consistently draws the coverage rates back toward the nominal 0.95 level as the number of independent clusters grows, ensuring robust evaluation of the true effects distribution even in highly dependent structures.

\begin{figure}[htbp]
    \centering
    \includegraphics[width=\textwidth]{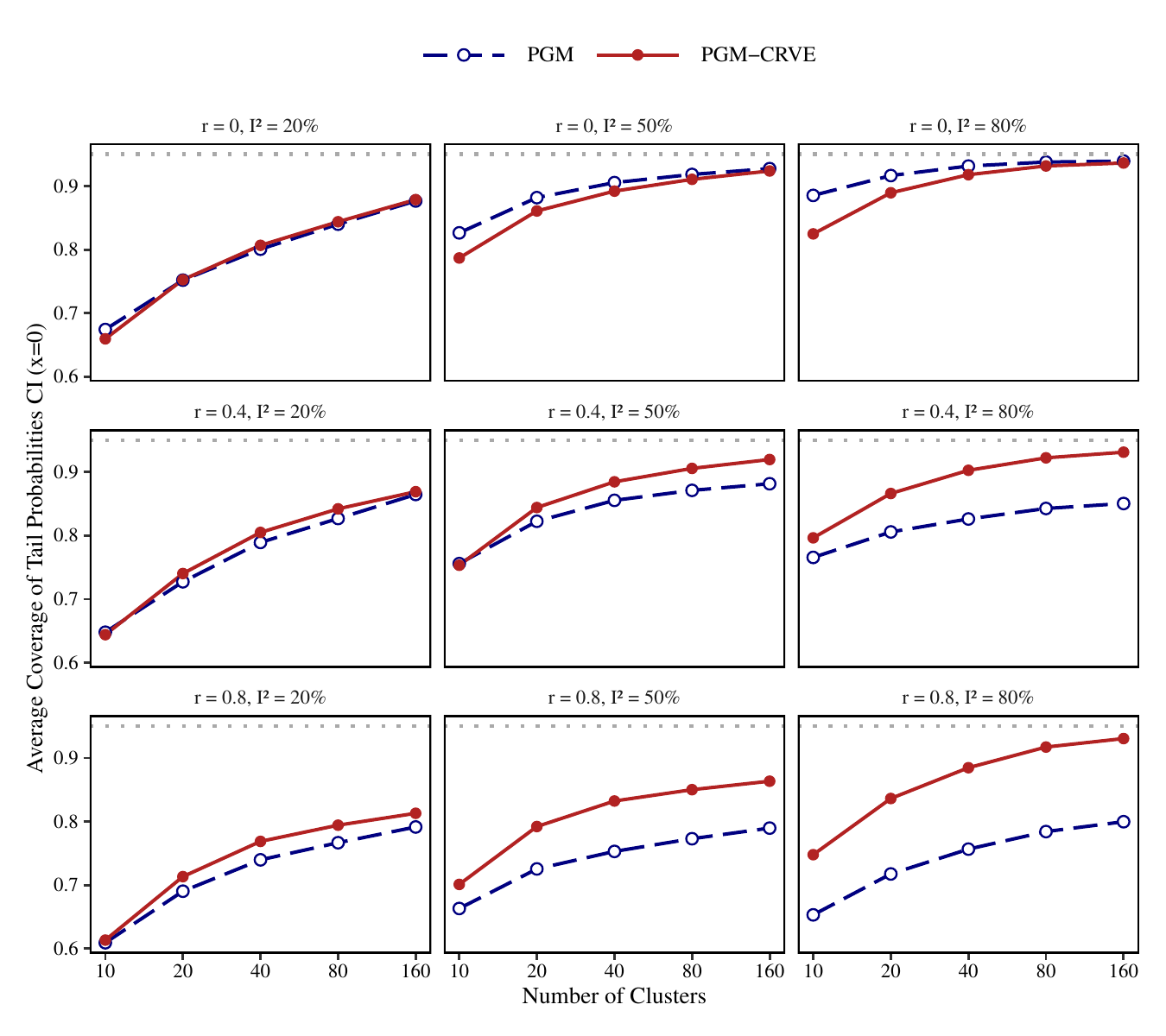}
    \caption{Average empirical coverage rates of the 95\% confidence intervals for the conditional tail probabilities across varying degrees of within-cluster correlation ($r$), $I^2$, and number of clusters.}
    \label{fig:cov_prob_crve}
\end{figure}

\section{Application: Environmental Education}

\subsection{Overall Efficacy}

To illustrate the practical value of the Penalized Gaussian Mixture method in a real-world setting, we applied our framework to a large, heterogeneous dataset from a recent meta-analysis by~\citet{VanDeWetering2022}. This study synthesizes the effects of environmental education (EE) programs on various outcomes in children and adolescents. We focus specifically on the ``Knowledge'' outcome, which comprises a multilevel structure of 231 effect sizes nested within 123 independent studies. The effect sizes represent the standardized mean difference between EE participants and control groups.

We first fit a standard multivariate meta-analysis model with robust variance estimation (RVE) using restricted maximum likelihood (REML) to accommodate the nested data structure. The conventional model estimated an overall mean effect of 0.953 (95\% CI: [0.805, 1.101]). The variance components were estimated at both the effect level ($\sigma^2_1 = 0.157$) and the study level ($\sigma^2_2 = 0.533$), yielding a substantial total heterogeneity variance of approximately 0.690.

Next, we fit the proposed PGM model to the same dataset. To optimize the model's complexity, we performed a grid search over the smoothing parameter $\lambda_\alpha$. Figure~\ref{fig:diagnostic} displays the diagnostic profiles, showing how the Akaike Information Criterion (AIC) and the effective degrees of freedom (EDF) respond to varying levels of penalization. By minimizing the AIC, we selected an optimal smoothing parameter that yielded an EDF of roughly 8.9. Notably, the optimized PGM model achieved an AIC of 513.4, which represents a marked improvement in model fit compared to the standard multivariate model's AIC of 523.3.

\begin{figure}[htbp]
    \centering
    \includegraphics[width=0.8\textwidth]{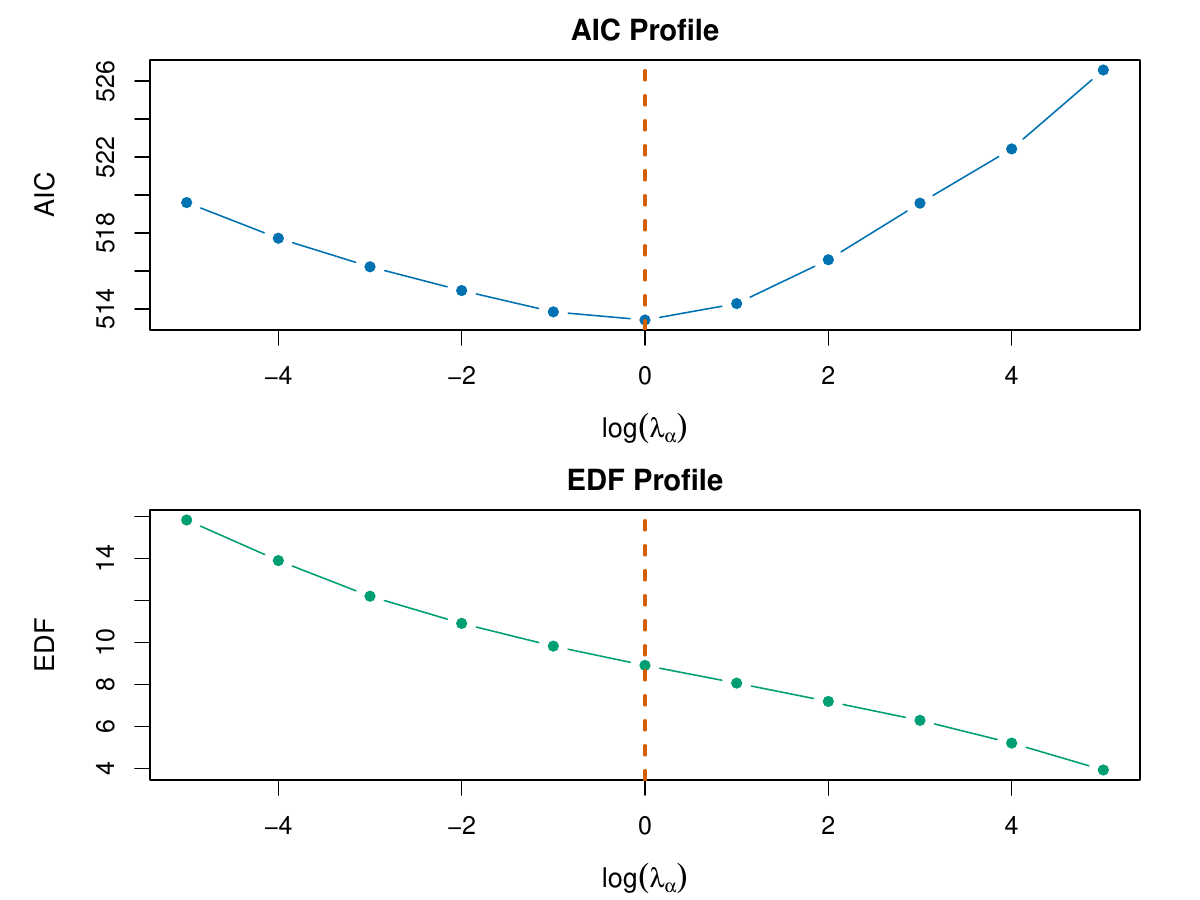}
    \caption{Diagnostic plot for the PGM model, illustrating the selection of the optimal smoothing parameter $\lambda_\alpha$ based on the AIC profile.}
    \label{fig:diagnostic}
\end{figure}

When comparing the central tendencies, both models draw a similar conclusion regarding the average efficacy of the programs. The PGM model estimated an overall mean effect of 0.920 (95\% CI: [0.768, 1.072]) and a total variance of 0.708, which closely mirrors the summary moments derived from the standard model. However, relying solely on these means and variances obscures a profound difference in the underlying distributional shapes, which carries significant implications for interpreting the intervention's real-world reliability.

Figure~\ref{fig:density} visualizes this divergence by superimposing the estimated densities of true effects from both models over the observed effect sizes. The standard normal model (blue dashed line) represents the marginal density, combining both variance components into a single symmetric Gaussian curve. To accommodate the large overall variance---which is primarily driven by a heavy right tail of highly successful programs---the normal model is forced to artificially inflate its left tail. Consequently, the normal model produces a widely dispersed 95\% prediction interval ranging from -0.675 to 2.581. More troublingly, it estimates that the probability of an adverse, negative effect ($P(\theta < 0)$) is approximately 12.6\%. Under this rigid assumption, a policymaker might conclude that environmental education is actually detrimental for a substantial minority of student populations.

\begin{figure}[htbp]
    \centering
    \includegraphics[width=0.9\textwidth]{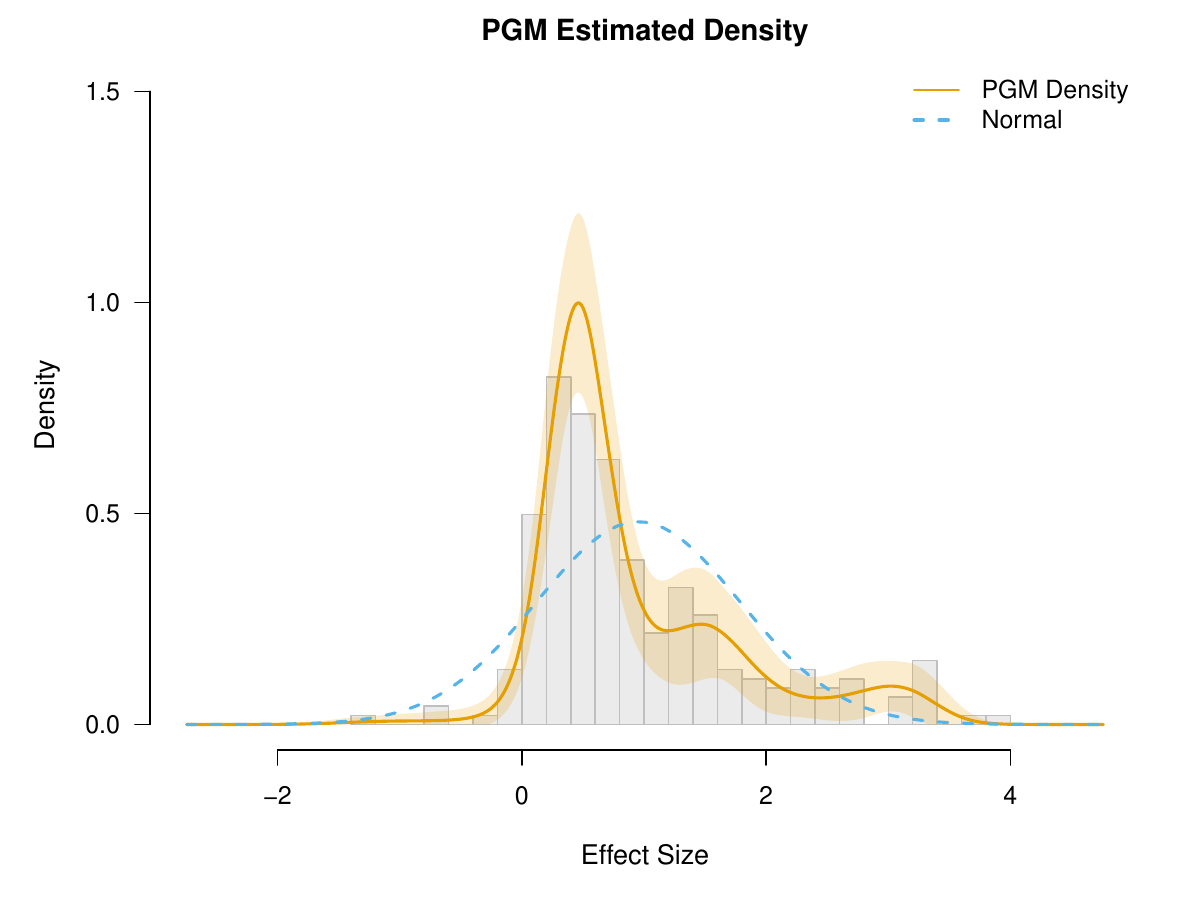}
    \caption{Estimated density of true effects for Environmental Education knowledge outcomes. The histogram shows the observed effect sizes. The blue dashed line represents the marginal density from the standard multivariate normal model, while the solid orange line represents the flexible PGM model with its 95\% pointwise confidence intervals (shaded region).}
    \label{fig:density}
\end{figure}

In stark contrast, the PGM model (solid orange line) adaptively identifies and contours to the true right-skewed nature of the evidence base. It tightens the density curve around zero on the left while extending a long, heavy tail to the right to capture the ``super-responder'' contexts. Because it does not mandate symmetry, the PGM model drastically adjusts our understanding of the intervention's lower bound. Its estimated 95\% prediction interval is notably narrower on the negative end, spanning from -0.091 to 3.156. Most importantly, the estimated probability of a negative effect drops significantly to just 3.9\% (95\% CI: [1.1\%, 6.7\%]). 

This application powerfully demonstrates the risk of default parametric assumptions in social science meta-analyses. While the normal-normal model accurately captured the average benefit, its forced symmetry severely exaggerated the risks. The PGM framework, by contrast, was able to confirm the high average efficacy while simultaneously revealing a much safer, more reliable intervention profile than standard methods would suggest.

\subsection{Explaining Heterogeneity: Formal vs. Non-Formal Settings}

While the intercept-only model provides a comprehensive view of the overall distribution, a primary goal of meta-analysis is to explain the observed heterogeneity. To this end, we examined the impact of the educational setting. Over time, environmental educational programs for children and adolescents have been implemented in formal, school-based settings as well as in non-formal settings, for example at zoos and nature centers. The original study coded this as a binary covariate where $1$ represents formal, school-based settings, and $0$ represents non-formal settings.

We first applied the location-shifting PGM model, which assumes the covariate shifts the entire distribution along the effect-size continuum without altering its fundamental shape. As shown in Figure~\ref{fig:location_shifting_app}, the model estimated a statistically significant negative shift for formal settings ($\hat{\beta} = -0.161$, $SE = 0.082$, $p = 0.048$, 95\% CI: [-0.321, -0.001]). Under this rigid assumption, one might confidently conclude that formal school-based programs are systematically, albeit slightly, less effective than non-formal programs.

\begin{figure}[htbp]
    \centering
    \includegraphics[width=0.9\textwidth]{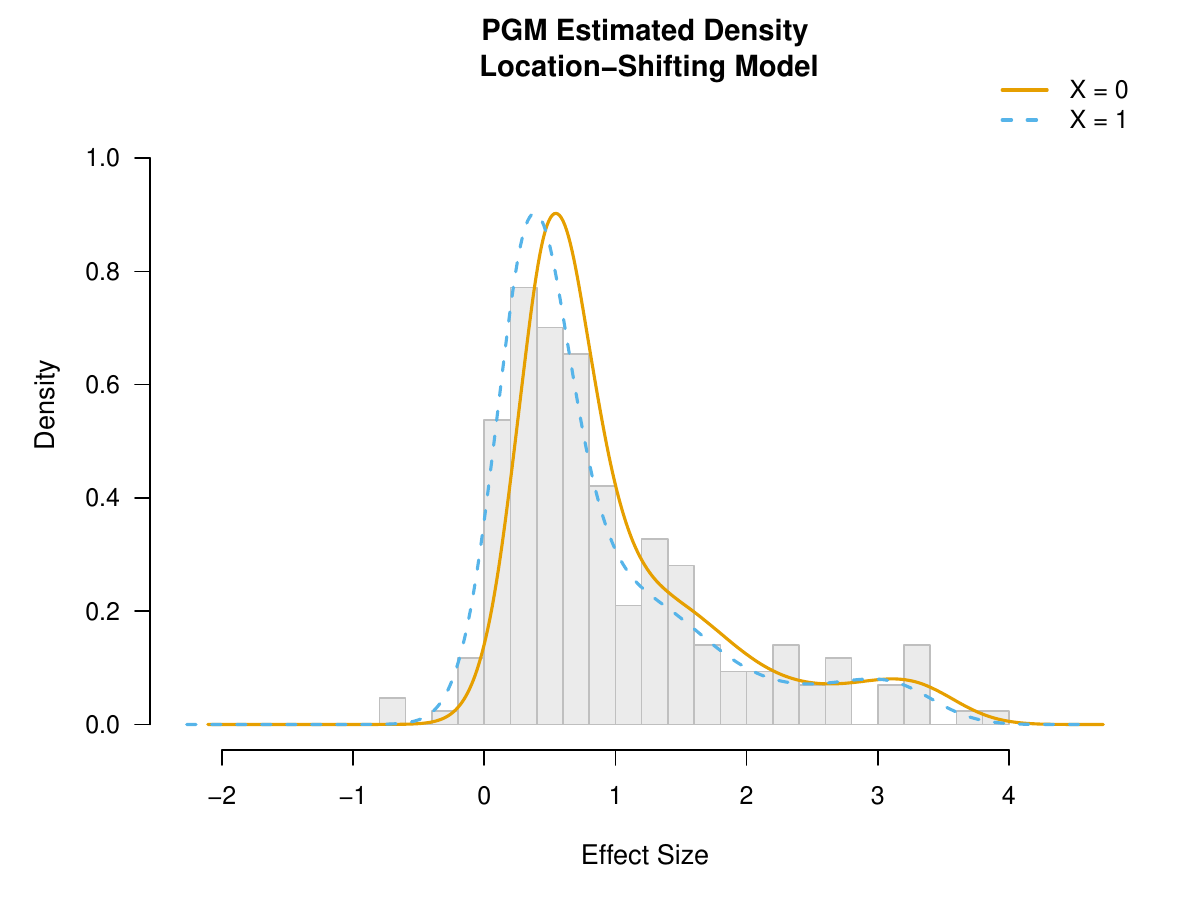}
    \caption{Estimated densities from the Location-Shifting Model. The entire distribution is shifted based on the setting (formal vs. non-formal), assuming the structural shape and variance remain perfectly constant across contexts.}
    \label{fig:location_shifting_app}
\end{figure}

However, applying the shape-morphing model reveals a much more nuanced and informative reality (Figure~\ref{fig:shape_morphing_app}). When allowing the distribution's shape to dynamically vary by setting, we found that the difference in mean effects between the two settings is not actually statistically significant ($\Delta\mu = 0.239$, $SE = 0.166$, $p = 0.149$). Instead, the crucial difference lies in their dispersion and structural shape. Formal settings exhibit drastically higher heterogeneity, with the estimated variance jumping from 0.311 in non-formal settings to 1.012 in formal settings ($\Delta\sigma^2 = 0.701$, $SE = 0.187$, $p < 0.001$).

\begin{figure}[htbp]
    \centering
    \includegraphics[width=0.9\textwidth]{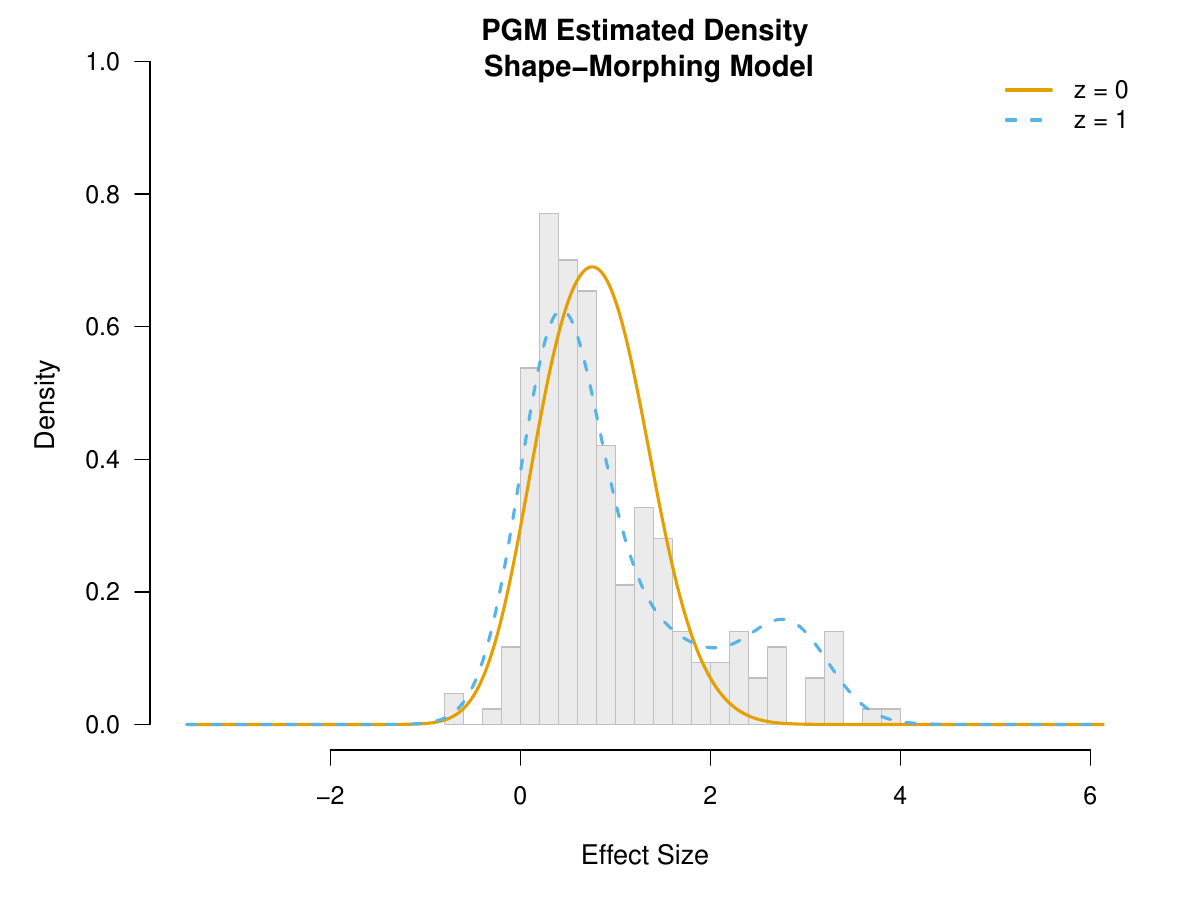}
    \caption{Estimated densities from the Shape-Morphing Model. Unlike the location-shifting approach, this model reveals that formal settings ($z=1$) are characterized by a drastically higher variance and a heavy right tail of ``super-responders'' compared to the more consistent non-formal settings ($z=0$).}
    \label{fig:shape_morphing_app}
\end{figure}

Visually, the shape-morphing plot demonstrates that non-formal settings ($z=0$, solid orange line) produce a relatively consistent, reliable positive effect heavily concentrated around the mean. In contrast, formal school-based settings ($z=1$, dashed blue line) are highly polarized. While a large portion of the mass is situated slightly lower than in non-formal settings, there is a prominent, heavy right tail (and a potential secondary mode) reaching past an effect size of 2.0. This indicates the existence of a subgroup of highly successful ``super-responder'' programs or specific school contexts where the intervention was exceptionally effective. A traditional meta-regression focusing only on mean shifts (like the location-shifting model) would completely obscure this dramatic difference in reliability and the existence of this high-performing subpopulation.

\section{Discussion}

The primary contribution of this paper is the introduction of a flexible framework that allows meta-analysts to extract deeper insights from their data than is possible with conventional methods. While the standard random-effects model compresses the complexity of a dataset into just two summary statistics, the mean and the variance, the Penalized Gaussian Mixture (PGM) approach leverages the full information of the observed effect sizes to reconstruct the entire distribution of true effects. This capability is particularly timely given the evolving landscape of social science research: larger, more heterogeneous datasets, often containing hundreds of studies. In such varying contexts, identifying the shape of the distribution is essential for understanding the nuances of an intervention's impact, identifying scenarios where an effect might be negligible or even harmful, and effectively guiding policy.

Our simulation results provide some guidance on the optimal conditions for applying this method. The PGM framework performs best when the ``signal'' of the distribution is strong relative to the ``noise'' of sampling error. Specifically, the method excels in scenarios characterized by a large number of studies and moderate-to-high $I^2$. Under these conditions, the algorithm has sufficient information to distinguish the features of the true effect distribution (such as skewness or bimodality) from the random sampling variation. Conversely, in small meta-analyses or cases where heterogeneity is negligible, the data may be too sparse to support a flexible density estimation. Therefore, we recommend that researchers employ the PGM method primarily for medium-to-large scale meta-analyses where significant between-study variation is present or suspected.

A critical feature of the PGM method is its robustness and universality. A common concern with flexible models is that they might overfit the data, detecting patterns that do not exist. However, by incorporating a third-order difference penalty, our method is designed to be conservative. As demonstrated in the simulation study (specifically the Normal distribution scenario), when the true effects effectively follow a normal distribution, the PGM estimates smoothly reduce to a shape indistinguishable from the normal curve. This property is vital for applied research: it implies that analysts do not need to pre-test for non-normality to decide whether to use this method. Instead, the PGM framework can serve as a universal tool, or at the very least, a routine sensitivity analysis, that defaults to the standard solution when standard assumptions hold, but automatically reveals complex structures when they exist.

In fact, systematic reviews and meta-analyses are widely regarded as the tip of the evidence pyramid, commanding significant influence over public policy and practice guidelines. It is precisely because these results are held in such high esteem that methodologists and applied researchers bear a heightened obligation to ensure their validity. In this context, retaining the normality assumption solely for analytical convenience, especially when faced with large, heterogeneous datasets, risks masking critical nuances or, worse, propagating misleading conclusions.

We acknowledge that this paper serves primarily as an introduction, demonstrating the feasibility and potential of using flexible models in meta-analysis. Several practical implementation details could be extended further to fully optimize new approaches. For example, our current approach selects the grid range, grid density, and the number of components based on heuristic values. Future work could refine these choices through data-driven algorithms, perhaps calibrating the grid resolution based on the number of effect sizes to ensure optimal balance between resolution and parsimony. In terms of estimation and inference, we have relied heavily on the Delta method for constructing confidence intervals. While analytically convenient, this approach can be dangerous, particularly when sampling distributions are asymmetric. A bootstrap method, while computationally more intensive, may offer a more robust alternative for inference. Furthermore, our current model selection relies heavily on a single criterion, the AIC, to determine the optimal penalty parameters. This process is inherently subject to uncertainty and may occasionally select models that overfit sparse data. Future research should prioritize the development of better, more robust model selection methods specifically tailored for this framework. This could involve exploring alternative information criteria, implementing advanced cross-validation techniques adapted for meta-analytic data, or even employing a fully Bayesian framework to capture model 
uncertainty comprehensively. Finally, to facilitate the adoption of the 
proposed framework and encourage reproducible research, we have made 
the \texttt{metapgm} R package and all code necessary to reproduce these 
analyses publicly available. Readers can access the R package on GitHub at \url{https://github.com/daihesui/metapgm}.

\end{document}